\newtheorem{theorem}{Theorem}
\newtheorem{corollary}[theorem]{Corollary}
\newtheorem{lemma}[theorem]{Lemma}
\newtheorem{definition}[theorem]{Definition}
\newtheorem{claim}[theorem]{Claim}
\newtheorem{remark}[theorem]{Remark}
\newtheorem*{claim*}{Claim}
\newcommand{\prob}[2]{\mathop{\mathrm{Pr}}_{#1}[#2]}
\newcommand{\avg}[2]{\mathop{\textbf{E}}_{#1}\left[#2\right]}
\newcommand{\poly}{\mathop{\mathrm{poly}}}
\newcommand{\abs}[1]{\left|#1\right|}
\newcommand{\Tdt}{\mathrm{T_{\text{DT}}}}
\newcommand{\ip}[2]{\langle #1, #2 \rangle}
\newcommand{\mc}[1]{\mathcal{#1}}
\newcommand{\ACC}{\mathrm{ACC}}
\newcommand{\TC}{\mathrm{TC}}
\newcommand{\sgn}{\mathrm{sgn}}
\newcommand{\coeff}{\mathrm{coeff}}
\newcommand{\comment}[2]{\marginpar{\tiny{\textbf{#1 }\textit{#2}}}}
\newcommand{\vaibhav}[1]{\comment{}{}}
\newcommand{\nutan}[1]{\comment{}{}}
\newcommand{\srikanth}[1]{\comment{}{}}
\newcommand{\deepanshu}[1]{\comment{}{}}
\newcommand{\swapnam}[1]{\comment{}{}}
\title{A \#SAT Algorithm for Small Constant-Depth Circuits \\ with PTF gates}
\author{
Swapnam Bajpai
\thanks{Department of Computer Science and Engineering, IIT Bombay, Mumbai, India. \texttt{swapnam@cse.iitb.ac.in}}
\and
Vaibhav Krishan
\thanks{Department of Computer Science and Engineering, IIT Bombay, Mumbai, India. \texttt{vaibhkrishan@iitb.ac.in}}
\and
Deepanshu Kush
\thanks{Department of Mathematics, IIT Bombay, Mumbai, India. \texttt{kush.deepanshu@gmail.com}}
\and
Nutan Limaye
\thanks{Department of Computer Science and Engineering, IIT Bombay, Mumbai, India. \texttt{nutan@cse.iitb.ac.in}}
\and
Srikanth Srinivasan
\thanks{Department of Mathematics, IIT Bombay, Mumbai, India. \texttt{srikanth@math.iitb.ac.in}}
}
\begin{document}

\maketitle

\begin{abstract}
We show that there is a randomized algorithm that, when given a small constant-depth Boolean circuit $C$ made up of gates that compute constant-degree Polynomial Threshold functions or PTFs (i.e., Boolean functions that compute signs of constant-degree polynomials), counts the number of satisfying assignments to $C$ in significantly better than brute-force time.

Formally, for any constants $d,k$, there is an $\varepsilon > 0$ such that the algorithm counts the number of satisfying assignments to a given depth-$d$ circuit $C$ made up of $k$-PTF gates such that $C$ has size at most $n^{1+\varepsilon}$. The algorithm runs in time $2^{n-n^{\Omega(\varepsilon)}}.$

Before our result, no algorithm for beating brute-force search was known even for a single degree-$2$ PTF (which is a depth-$1$ circuit of linear size).

The main new tool is the use of a learning algorithm for learning degree-$1$ PTFs (or Linear Threshold Functions) using comparison queries due to Kane, Lovett, Moran and Zhang (FOCS 2017). We show that their ideas fit nicely into a memoization approach that yields the \#SAT algorithms.
\end{abstract}

\section{Introduction}

This paper adds to the growing line of work on \emph{circuit-analysis algorithms}, where we are given as input a Boolean circuit $C$ from a fixed class $\mathcal{C}$ computing a function $f:\{-1,1\}^n \rightarrow \{-1,1\}$\footnote{We work with the $\{-1,1\}$ basis for Boolean functions, which is by now standard in the literature. (See for instance~\cite{ODonnell}.) Here $-1$ stands for True and $1$ stands for False.} and we are required to compute some parameter of the function $f$. A typical example of this is the question of satisfiability, i.e. whether $f$ is the constant function $1$ or not. In this paper, we are interested in computing \#SAT$(f)$, which is the number of satisfying assignments of $f$ (i.e. $\abs{\{a \in \{-1,1\}^n \mid f(a) = -1\}}$). 

Problems of this form can always be solved by ``brute-force" in time $\poly(|C|)\cdot 2^n$ by trying all assignments to $C$. The question is can this brute-force algorithm be significantly improved, say to time $2^n/n^{\omega(1)}$ when $C$ is small, say $|C| \leq n^{O(1)}$.

Such algorithms, intuitively are able to distinguish a small circuit $C \in \mathcal{C}$ from a ``black-box" and hence find some structure in $C$. This structure, in turn, is useful in answering other questions about $\mathcal{C}$, such as proving lower bounds against the class $\mathcal{C}$.\footnote{This intuition was provided to us by Ryan Williams.} There has been a large body of work in this area, a small sample of which can be found in \cite{PPZ,PPSZ,Will10,Will11}. A striking result of this type was proved by Williams~\cite{Will10} who showed that for many circuit classes $\mathcal{C}$, even \emph{co-non-deterministic} satisfiability algorithms running in better than brute-force time yield lower bounds against $\mathcal{C}$. 

Recently, researchers have also uncovered tight connections between many combinatorial problems and circuit-analysis algorithms, showing that even modest improvements over brute-force search can be used to improve long-standing bounds for these combinatorial problems (see, e.g.,~\cite{VVWsurvey,AHWW,AR,AB}). This yields further impetus in improving known circuit-analysis algorithms. \vaibhav{any references we should add here?}\srikanth{Added.}

This paper is concerned with \#SAT algorithms for constant depth threshold circuits, denoted as $\TC^0$, which are Boolean circuits where each gate computes a linear threshold function (LTF); an LTF computes a Boolean function which accepts or rejects based on the sign of a (real-valued) linear polynomial evaluated on its input. Such circuits are surprisingly powerful: for example, they can perform all integer arithmetic efficiently~\cite{BCH,HAB}, and are at the frontier of our current lower bound techniques~\cite{KW,Chen}. 

It is natural, therefore, to try to come up with circuit-analysis algorithms for threshold circuits. Indeed, there has a large body of work in the area (reviewed below), but some extremely simple questions remain open. 

An example of such a question is the existence of a better-than-brute-force algorithm for satisfiability of degree-$2$ PTFs. Informally, the question is the following: we are given a quadratic polynomial $Q(x_1, \ldots, x_n)$ in $n$ Boolean variables and we ask if there is any Boolean assignment $a \in \{-1,1\}^n$ to $x_1,\ldots, x_n$ such that $Q(a) <0$. (Note that for a linear polynomial instead of a quadratic polynomial, this problem is trivial.)

Surprisingly, no algorithm is known for this problem that is significantly better than $2^n$ time.\footnote{An algorithm for claimed for this problem in the work of Sakai, Seto, Tamaki and Teruyama~\cite{SSTT}. Unfortunately, the proof of this claim is flawed. See Footnote 1 on page 4 of~\cite{KL}.} In this paper, we solve the stronger counting variant of this problem for any constant-degree PTFs. We start with some definitions and then describe this result. 

\begin{definition}[Polynomial Threshold Functions]
\label{def:ptf}
A \emph{Polynomial Threshold Function (PTF) on $n$ variables of degree-$k$} is a Boolean function $f:\{-1,1\}^n\rightarrow \{-1,1\}$ such that there is a degree-$k$ multilinear polynomial $P(x_1,\ldots,x_n)\in \mathbb{R}[x_1,\ldots,x_n]$ that, for all $a\in \{-1,1\}^n,$ satisfies $f(a) = \sgn(P(a)).$ (We assume that $P(a)\neq 0$ for any $a\in \{-1,1\}^n$.)

In such a scenario, we call $f$ a \emph{$k$-PTF}. In the special case that $k = 1$, we call $f$ a \emph{Linear Threshold function (LTF)}. We also say that the polynomial $P$ \emph{sign-represents} $f$.

We define weight of $P$, denoted as $w(P)$, to be the bit-complexity of the sum of absolute values of all the coefficients of $P$. 
\end{definition}

The \#SAT problem for $k$-PTFs is the problem of counting the number of assignments that satisfy a given $k$-PTF $f$. Formally,
\begin{definition}[\#SAT problem for $k$-PTFs]
The problem is defined as follows. 
\begin{itemize}
\item[]\textbf{Input}: A $k$-PTF $f$, specified by a degree-$k$ polynomial $P(x_1,\ldots,x_n)$ with integer coefficients.\footnote{It is known~\cite{Muroga} that such a representation always exists.}
\item[]\textbf{Output}: The number of satisfying assignments to $f$. That is, the number of $a\in \{-1,1\}^n$ such that $P(a) < 0.$
\end{itemize}
We use \#SAT$(f)$ to denote this output. We say that the input instance has parameters $(n,M)$ if $n$ is the number of input variables and $w(P)\leq M$. 
\end{definition}
\begin{remark}
\label{rem:M}
An interesting setting of $M$ is $\poly(n)$ since any $k$-PTF can be represented by an integer polynomial with coefficients of bit-complexity at most $\tilde{O}(n^k)$~\cite{Muroga}.
\end{remark}

We give a better-than-brute-force algorithm for \#SAT$(k\text{-PTF})$. Formally we prove the following theorem. 

\begin{theorem}
\label{thm:ptfsat-intro}
Fix any constant $k$. There is a zero-error randomized algorithm that solves the \#SAT problem for $k$-PTFs in time $\poly(n,M)\cdot2^{n-S}$ where $S = \tilde{\Omega}(n^{1/(k+1)})$ and $(n,M)$ are the parameters of the input $k$-PTF $f.$ (The $\tilde{\Omega}(\cdot)$ hides factors that are inverse polylogarithmic in $n$.) 
\end{theorem}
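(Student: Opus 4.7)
The plan is to build, for the given $k$-PTF $P$, a compressed decision DAG of size roughly $2^{n-S}$ whose leaves carry the value of $P$, so that $\#\mathrm{SAT}(P)$ can be computed by a weighted bottom-up traversal. The savings over the trivial $2^{n}$-leaf decision tree come from merging nodes whose ``residual PTFs'' induce the same Boolean function on the remaining variables, and the enabling tool is the Kane-Lovett-Moran-Zhang learning algorithm for LTFs using comparison queries. Viewing a $k$-PTF as an LTF over the basis of $\binom{n}{\leq k}$ monomials, the KLMZ algorithm supplies a small canonical set of test inputs whose signs fingerprint any $k$-PTF up to Boolean equivalence, which is exactly what an efficient memoization scheme needs.

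Concretely, I would process the variables $x_1, \dots, x_n$ in a (possibly randomized) order and grow the DAG level by level. At level $i$, each node represents an equivalence class of partial assignments of $x_1, \dots, x_i$, keyed by the Boolean function that $P|_\rho$ computes on $x_{i+1}, \dots, x_n$. Two candidate nodes are identified whenever their KLMZ fingerprints agree; by the learning guarantee this witnesses Boolean equivalence, and a cheap polynomial-comparison step can certify it deterministically to support the zero-error claim. Once the DAG is built, $\#\mathrm{SAT}(P)$ is read off by the standard weighted leaf sum, with each node carrying a multiplicity that tracks how many partial assignments fall into its class. Each fingerprint evaluation costs $\poly(n, M) \cdot n^{O(k)}$ time, which is absorbed into the stated overhead.

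The main obstacle is bounding the width of the DAG and extracting the quantitative exponent $1/(k+1)$. The residuals of $P$ are far from arbitrary $k$-PTFs on the remaining variables: each coefficient of $P|_\rho$ is itself a polynomial in $\rho$ of degree at most $k$, so the family of residuals forms the image of $\{-1,1\}^{i}$ under a structured, low-degree polynomial map into the space of $k$-PTFs on $n-i$ variables. Combining this parameterization with a Warren/Milnor-style sign-pattern count (or, more natively, with the inference-dimension bound that underlies KLMZ) should bound the number of inequivalent residuals at level $i$, and balancing this bound against the trivial $2^{i}$ width is what I expect to produce $S = \tilde\Omega(n^{1/(k+1)})$. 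The delicate part is that, in the regime where $i$ is close to $n$, the sign-pattern bound has to save enough to beat the naive $2^{n-i}$ work per surviving node, and I anticipate that getting this balance tight (including the polylogarithmic loss hidden in $\tilde\Omega$) is the technical heart of the proof.
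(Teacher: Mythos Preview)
Your high-level ingredients---memoization over restrictions plus the KLMZ linear decision tree---are exactly those of the paper, but you have wrapped them in an unnecessarily complex and speculative framework. The paper's argument is a single-shot memoization with no DAG, no level-by-level merging, and no sign-pattern counts. One fixes $m = n^{1/(k+1)}/\log n$ and precomputes, via Corollary~\ref{cor:KLM} with $\ell=1$, a linear decision tree $T$ of depth $O(m^{k+1}\log m)$ that, given the coefficient vector of \emph{any} degree-$k$ polynomial $Q$ in $m$ variables, outputs $\#\mathrm{SAT}(\sgn(Q))$; this tree is built once, in time $2^{O(m^{k+1}\log m)} = 2^{o(n)}$. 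Then one enumerates all $2^{n-m}$ assignments $\sigma$ to $x_{m+1},\dots,x_n$, computes $\coeff(P_\sigma)$ in $\poly(n,M)$ time, runs $T$ on it in another $\poly(n,M)$, and sums the answers. The savings is $S=m$, and the exponent $1/(k+1)$ falls out directly from the constraint $m^{k+1}\log m = o(n)$ on the tree's construction cost; no balancing argument is needed.

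By contrast, your proposed ``technical heart'' (bounding the DAG width via Warren/Milnor) is both unnecessary and unsubstantiated. At an intermediate level $i$, the residuals are $(n-i)$-variable $k$-PTFs and nothing prevents all $2^{i}$ partial assignments from yielding Boolean-inequivalent functions, so your DAG is just the full binary tree until $n-i$ is small---at which point you have rederived the paper's enumeration. Worse, fingerprinting residuals at level $i$ via KLMZ would require a tree for $(n-i)$-variable $k$-PTFs, whose construction time $2^{\Theta((n-i)^{k+1})}$ swamps everything when $n-i$ is large. Finally, KLMZ does not furnish a fixed ``canonical set of test inputs'' as you describe: it is an \emph{adaptive} linear decision tree whose queries are linear inequalities on the \emph{coefficient vector}. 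That the queries act on $\coeff(P_\sigma)$, which is trivial to compute from $P$ and $\sigma$, and that the tree depth depends only on $m$ and not on the bit-length of those coefficients, is exactly what makes each lookup $\poly(n,M)$-time and hence makes the whole scheme work.
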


We then extend this result to a powerful model of circuits called \emph{$k$-PTF circuits}, where each gate computes a $k$-PTF. This model was first studied by Kane, Kabanets and Lu~\cite{KKL} who proved strong average case lower bounds for slightly superlinear-size constant-depth $k$-PTF circuits. Using these ideas, Kabanets and Lu~\cite{KL} were able to give a \#SAT algorithm for a restricted class of  $k$-PTF circuits, where each gate computes a PTF with a subquadratically many, say $n^{1.99}$, monomials (while the size remains the same, i.e. slightly superlinear).\footnote{Their result also works for the slightly larger class of PTFs that are subquadratically sparse in the $\{0,1\}$-basis with \emph{no restriction} on degree. Our result can also be stated for the larger class of \emph{polynomially sparse} PTFs, but for the sake of simplicity, we stick to constant-degree PTFs.} A reason for this restriction on the PTFs was that they did not have an algorithm to handle even a single degree-$2$ PTF (which can have $\Omega(n^2)$ many monomials).

Building on our \#SAT algorithm for $k$-PTFs and the ideas of~\cite{KL}, we are able to handle general $k$-PTF circuits of slightly superlinear size. We state these results formally below. 

We first define $k$-PTF circuits formally.
\begin{definition}[$k$-PTF circuits]
\label{def:ptf-ckt}
A \emph{$k$-PTF circuit} on $n$ variables is a Boolean circuit on $n$ variables where each gate $g$ of fan-in $m$ computes a fixed $k$-PTF of its $m$ inputs. The size of the circuit is the \emph{number of wires} in the circuit, and the depth of the circuit is the longest path from an input to the output gate.\footnote{Note, crucially, that only the fan-in of a gate counts towards its size. So any gate computing a $k$-PTF on $m$ variables only adds $m$ to the size of the circuit, though of course the polynomial representing this PTF may have $\approx m^k$ monomials.}
\end{definition}

The problems we consider is the \#SAT problem for $k$-PTF circuits, defined as follows.
\begin{definition}[\#SAT problem for $k$-PTF circuits]
The problem is defined as follows. 
\begin{itemize}
\item[]\textbf{Input}: A $k$-PTF circuit $C$, where each gate $g$ is labelled by an integer polynomial that sign-represents the function that is computed by $g$.
\item[]\textbf{Output}: The number of satisfying assignments to $C$.
\end{itemize}
We use \#SAT$(C)$ to denote this output. We say that the input instance has parameters $(n,s,d,M)$ where $n$ is the number of input variables, $s$ is the size of $C$, $d$ is the depth of $C$ and $M$ is the maximum over the weights of the degree-$k$ polynomials specifying the $k$-PTFs in $C$. We will say that $M$ is the weight of $C$, denoted by $w(C)$. 
\end{definition}

We now state our result on \#SAT for $k$-PTF circuits. The following result implies Theorem~\ref{thm:ptfsat-intro}, but we prove them separately. 

\begin{theorem}
\label{thm:ptfcktsat-intro}
Fix any constants $k,d$. Then the following holds for some constant $\varepsilon_{k,d} > 0$ depending on $k,d.$ There is a zero-error randomized algorithm that solves the \#SAT problem for $k$-PTF circuits of size at most $s = n^{1+\varepsilon_{k,d}}$  with probability at least $1/4$ and outputs $?$ otherwise. The algorithm runs in time $\poly(n,M)\cdot 2^{n-S}$, where $S = n^{\varepsilon_{k,d}}$ and $(n,s,d,M)$ are the parameters of the input $k$-PTF circuit. 
\end{theorem}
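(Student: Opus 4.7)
The plan is to prove Theorem~\ref{thm:ptfcktsat-intro} by induction on the depth $d$, with the base case $d=1$ supplied by Theorem~\ref{thm:ptfsat-intro}. The engine of the inductive step is a random-restriction argument in the spirit of~\cite{KKL,KL} that reduces a depth-$d$ instance to a depth-$(d-1)$ instance on many fewer variables, with Theorem~\ref{thm:ptfsat-intro} invoked at the bottom of the recursion.

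Concretely, for a depth-$d$ circuit $C$ of size $s \leq n^{1+\varepsilon_{k,d}}$, I would sample a random restriction $\rho$ keeping each variable free independently with probability $p = n^{-\delta}$, for $\delta = \delta_{k,d}$ chosen with $\delta \ll \varepsilon_{k,d-1}$. Using the size bound on $C$ together with concentration for the number of live inputs of each bottom gate, I would argue that with constant probability, every bottom gate of $C|_\rho$ has at most $n^{o(1)}$ surviving inputs (with those gates of very large original fan-in becoming constant with overwhelming probability, by the restriction lemma for $k$-PTFs used in~\cite{KKL}). Call such restrictions \emph{good}; if the sampled $\rho$ is not good, output $?$, preserving zero error. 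The overall success probability can be boosted to $1/4$ (or larger) by independent repetition.

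Conditioned on a good $\rho$ (and on $|V| \geq pn/2$, which holds with high probability by Chernoff, where $V$ is the set of live variables), enumerate over all $2^{(1-p)n}$ assignments $\beta$ to the dead variables. For each fixed $\beta$, every bottom gate is a junta on at most $n^{o(1)}$ live bits, so rather than syntactically fusing the bottom two layers --- which need not preserve the $k$-PTF property --- I would enumerate the $2^{n^{o(1)}}$ possible joint outputs of the bottom gates feeding each second-layer gate. This reduces to a depth-$(d-1)$ $k$-PTF \#SAT instance on $|V|$ variables, which the inductive hypothesis solves in time $\poly(|V|,M)\cdot 2^{|V| - |V|^{\varepsilon_{k,d-1}}}$. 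Combined with the enumeration over $\beta$, the total running time is
\[
\poly(n,M)\cdot 2^{n - |V|^{\varepsilon_{k,d-1}}} \;\leq\; \poly(n,M)\cdot 2^{n - n^{\varepsilon_{k,d}}},
\]
provided $\varepsilon_{k,d} < (1-\delta)\,\varepsilon_{k,d-1}$, which recursively defines the constants $\varepsilon_{k,d}$ (terminating at $\varepsilon_{k,1}$ derived from Theorem~\ref{thm:ptfsat-intro}).

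The hardest part will be the simultaneous control of \emph{all} bottom gates under the single random restriction: gates of very large original fan-in must simplify (become constant) because their restricted polynomials lose nearly all variance, while gates of small fan-in automatically remain $n^{o(1)}$-juntas on the live variables. A secondary obstacle is the layer-collapse step, since a $k$-PTF of $k$-PTFs is not in general a $k$-PTF; the junta enumeration circumvents this at an overhead of $2^{n^{o(1)}}$, which is absorbed by the savings budget but only barely. Balancing $\varepsilon_{k,d}$, $\delta_{k,d}$, and the inductive parameters so that the size constraint, the structural lemma, and the savings exponent all hold simultaneously at every level is the central parameter-juggling task, and is what forces $\varepsilon_{k,d}$ to shrink geometrically in $d$ as stated in the theorem.
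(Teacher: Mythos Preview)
Your proposal has a genuine gap at its core. The claim that after a random restriction every bottom gate either becomes an $n^{o(1)}$-junta or an exact constant is not what the restriction lemma for $k$-PTFs from~\cite{KKL,KL} gives. What that lemma yields is that all but $n^{\beta}$ of the bottom gates become \emph{$\delta$-biased}, i.e.\ $\delta$-close to a constant with $\delta = \exp(-n^{\Omega(1)})$; it does not make them exactly constant, nor does it bound the number of live variables they depend on. A gate of moderate fan-in (say $n^{1/2}$) will typically retain $n^{1/2-\delta}$ live inputs after your restriction and will be neither a small junta nor a constant function---only biased. So the dichotomy ``small fan-in $\Rightarrow$ junta, large fan-in $\Rightarrow$ constant'' on which your whole layer-collapse step rests does not hold.

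This distinction is exactly the crux of the problem. To treat a biased gate as a constant for the recursion, you must (i) separately account for its minority assignments, of which there can be $\delta\cdot 2^{|V|}$ many, and (ii) enforce in the recursive call that the gate actually takes its majority value. The paper does both: it enumerates the minority assignments of all biased gates via a new subroutine built on the linear decision trees of~\cite{KLMZ} (this is where the main new idea enters), and it carries through the recursion a growing set $\mathcal{P}$ of side $k$-PTF constraints that pin the biased gates to their majority values and the few unbiased gates to their guessed values, so that the recursive instance is $(\text{depth-}(d{-}1)\text{ circuit},\ \mathcal{P})$ rather than a bare circuit. Your ``enumerate the $2^{n^{o(1)}}$ possible joint outputs'' step addresses neither issue: the number of bottom gates (and hence joint output patterns) is not $n^{o(1)}$ in general, and without consistency constraints the depth-$(d-1)$ instance you produce no longer counts the same assignments as the original.
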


\paragraph{Previous work.} Satisfiability algorithms for $\TC^0$ have been widely investigated.  Impagliazo, Lovett, Paturi and Saks~\cite{IPS,ILPS} give algorithms for checking satisfiability of depth-$2$ threshold circuits with $O(n)$ gates. An incomparable result was proved by Williams~\cite{Will14} who obtained algorithms for subexponential-sized circuits from the class $\ACC^0 \circ \text{LTF}$, which is a subclass of subexponential $\TC^0$.\footnote{$\ACC^0 \circ \text{LTF}$ is a subclass of $\TC^0$ where general threshold gates are allowed only just above the variables. All computations above these gates are one of AND, OR or Modular gates (that count the number of inputs modulo a constant). It is suspected (but not proved) that subexponential-sized $\ACC^0$ circuits cannot simulate even a single general threshold gate. Hence, it is not clear if the class of subexponential-sized $\ACC^0\circ \text{LTF}$ circuits contains even depth-$2$ $\TC^0$ circuits of linear size.} For the special case of $k$-PTFs (and generalizations to sparse PTFs over the $\{0,1\}$ basis) with \emph{small weights}, a \#SAT algorithm follows from the result of Sakai et al.~\cite{SSTT}.

For general constant-depth threshold circuits, the first satisfiability algorithm was given by Chen, Santhanam and Srinivasan~\cite{CSS}. In their paper, Chen et al. gave the first average case lower bound for $\TC^0$ circuits of slightly super linear size $n^{1+\varepsilon_d}$, where $\varepsilon_d$ depends on the depth of the circuit. (These are roughly the strongest size lower bounds we know for general $\TC^0$ circuits even in the worst case~\cite{IPS97}.) Using their ideas, they gave the first (zero-error randomized) improvement to brute-force-search for satisfiability algorithms (and indeed even \#SAT algorithms) for constant depth $\TC^0$ circuits of size at most $n^{1+\varepsilon_d}$.  

The lower bound results of~\cite{CSS} were extended to the much more powerful class of $k$-PTF circuits (of roughly the same size as~\cite{CSS}) by Kane, Kabanets and Lu~\cite{KKL}. In a follow-up paper, Kabanets and Lu~\cite{KL} considered the satisfiability question for $k$-PTF circuits, and could resolve this question in the special case that each PTF is subquadratically sparse, i.e. has $n^{2-\Omega(1)}$ monomials. One of the reasons for this sparsity restriction is that their strategy does not seem to yield a SAT algorithm for a single degree-$2$ PTF (which is a depth-$1$ $2$-PTF circuit of \emph{linear} size). 

\subsection{Proof outline.} For simplicity we discuss SAT algorithms instead of \#SAT algorithms. 

\subsubsection*{Satisfiability algorithm  for $k$-PTFs.} At a high-level, the algorithm uses Memoization, which is a standard and very useful strategy for satisfiability algorithms (see, e.g. \cite{Santhanam}). Let $\mathcal{C}$ be a circuit class and $\mathcal{C}_n$ be the subclass of circuits from $\mathcal{C}$ that have $n$ variables. Memoization algorithms for $\mathcal{C}$-SAT fit into the following two-step template.

\begin{itemize}
\item Step 1: Solve by brute-force all instances of $\mathcal{C}$-SAT where the input circuit $C' \in \mathcal{C}_m$ for some suitable $m \ll n$. (Typically, $m = n^\varepsilon$ for some constant $\varepsilon$.) Usually this takes $\exp(m^{O(1)}) \ll 2^n$ time. 

\item Step 2: On the input $C \in \mathcal{C}_n$, set all input variables $x_{m+1}, \ldots, x_n$ to Boolean values and for each such setting, obtain $C'' \in \mathcal{C}_m$ on $m$ variables. Typically $C''$ is a circuit for which we have solved satisfiability in Step 1 and hence by a simple table lookup, we should be able to check if $C''$ is satisfiable in $\poly(|C|)$ time. Overall, this takes time $O^*(2^{n-m}) \ll 2^n$.
\end{itemize}

At first sight, this seems perfect for $k$-PTFs, since it is a standard result that the number of $k$-PTFs on $m$ variables is at most $2^{O(m^{k+1})}$~\cite{chow}. Thus, Step 1 can be done in $2^{O(m^{k+1})} \ll 2^n$ time. 

For implementing Step 2, we need to ensure that the lookup (for satisfiability for $k$-PTFs on $m$ variables) can be done quickly. Unfortunately how to do this is unclear. The following two ways suggest themselves. 

\begin{itemize}
\item Store all polynomials $P' \in \mathbb{Z}[x_1, \ldots, x_m]$ with small coefficients. Since every $k$-PTF $f$ can be sign-represented by an integer polynomial with coefficients of size $2^{\poly(m)}$~\cite{Muroga}, this can be done with a table of size $2^{\poly(m)}$ and in time $2^{\poly(m)}$. When the coefficients are small (say of bit-complexity $\leq n^{o(1)}$), then this strategy already yields a \#SAT algorithm, as observed by Sakai et al.~\cite{SSTT}. Unfortunately, in general, given a restriction $P'' \in \mathbb{Z}[x_1, \ldots, x_m]$ of a polynomial $P \in \mathbb{Z}[x_1, \ldots, x_n],$ its coefficients can be much larger (say $2^{\poly(n)}$) and it is not clear how to efficiently find a polynomial with small coefficients that sign-represents the same function. 

\item It is also known that every $k$-PTF on $m$ variables can be uniquely identified by $\poly(m)$ numbers of bit-complexity $O(m)$ each~\cite{chow}: \nutan{This statement is sounding a bit strange.} these are called the ``Chow parameters'' of $f$. Again for this representation, it is unclear how to compute efficiently the Chow parameters of the function represented by the restricted polynomial $P''$. (Even for an LTF, computing the Chow parameters is as hard as Subset-sum~\cite{OS}.)
\end{itemize}

The way we solve this problem is by using a beautiful recent result of Kane, Lovett, Moran and Zhang~\cite{KLMZ}, who show that there is a simple decision tree that, when given as input the coefficients of any degree-$k$ polynomial $P' \in \mathbb{Z}[x_1, \ldots, x_m]$, can determine the sign of the polynomial $P'$ at all points in $\{-1,1\}^m$ using only $\poly(m)$ queries to the coefficients of $P$. Here, each query is a linear inequality on the coefficients of $P$; such a decision tree is called a \emph{linear decision tree}. 

Our strategy is to replace Step 1 with the construction of this linear decision tree (which can be done in $\exp(m^{O(1)})$ time). At each leaf of the linear decision tree, we replace the truth table of the input polynomial $P'$ by a single bit that indicates whether $f' = \text{sgn}(P')$ is satisfiable or not. 

In Step 2, we simply run this decision tree on our restricted polynomial $P''$ and obtain the answer to the corresponding satisfiability query in $\poly(m,w(P''))$ time. Note, crucially, that the height of the linear decision tree implied by~\cite{KLMZ} construction is $\poly(m)$ and \emph{independent} of the bit-complexity of the coefficients of the polynomial $P''$ (which may be as big as $\poly(n)$
 in our algorithm). This concludes the description of the algorithm for $k$-PTF. 

\subsubsection*{Satisfiability algorithm for $k$-PTF circuits.} For $k$-PTF circuits, we follow a template set up by the result of Kabanets and Lu~\cite{KL} on sparse-PTF circuits. We start by describing this template and then describe what is new in our algorithm. 

The Kabanets-Lu algorithm is an induction on the depth $d$ of the circuit (which is a fixed constant). Given as input a depth $d$ $k$-PTF circuit $C$ on $n$ variables, Kabanets and Lu do the following:

{Depth-reduction:} In \cite{KL}, it is shown that on a random restriction that sets all \emph{but} $n^{1-2\beta}$ variables (here, think of $\beta$ as a small constant, say $0.01$) to random Boolean values, the bottom layer of $C$ simplifies in the following sense. 

All but $t \leq n^\beta$ gates at the bottom layer become \emph{exponentially} biased, i.e. on all but $\delta = \exp(-n^{\Omega(1)})$ fraction of inputs they are equal to a fixed $b \in \{-1,1\}$. Now, for each such biased gate $g,$ there is a minority value $b_g \in \{-1,1\}$ that it takes on very few inputs. \cite{KL} show how to enumerate this small number of inputs in $\delta \cdot 2^n$ time and check if there is a satisfying assignment among these inputs. Having ascertained that there is no such assignment, we replace these gates by their majority value and there are only $t$ gates at the bottom layer. At this point, we ``guess" the output of these $t$ ``unbiased" gates and for each such  guess $\sigma \in \{-1,1\}^t$, we check if there is an assignment that simultaneously satisfies:
\begin{enumerate}
    \item[(a)] the depth $d-1$ circuit $C'$, obtained by setting the unbiased gates to the guess $\sigma$, is satisfied.
    \item[(b)] each unbiased gate $g_i$ evaluates to the corresponding value $\sigma_i$.
\end{enumerate}

{Base case:} Continuing this way, we eventually get to a base case which is an AND of sparse PTFs for which there is a satisfiability algorithm using the polynomial method. 

In the above algorithm, there are two steps where subquadratic sparsity is crucially used. The first is the minority assignment enumeration algorithm for PTFs, which uses ideas of Chen and Santhanam~\cite{CS} to reduce the problem to enumerating biased LTFs, which is easy~\cite{CSS}. The second is the base case, which uses a non-trivial polynomial approximation for LTFs~\cite{Sri13}. Neither of these results hold for even degree-$2$ PTFs in general. To overcome this, we do the following.

\paragraph{Enumerating minority assignments.} Given a $k$-PTF on $m$ variables that is $\delta=\exp(-n^{\Omega(1)})$-close to $b \in \{-1,1\}$, we enumerate its minority assignments as follows. First, we set up a linear decision tree as in the $k$-PTF satisfiability algorithm. Then we set all but $q \approx \log \frac{1}{\delta}$ variables of the PTF. On most such settings, the resulting PTF becomes the constant function and we can check this using the linear decision tree we created earlier. In this setting, there is nothing to do. Otherwise, we brute-force over the remaining variables to find the minority assignments. Setting parameters suitably, this yields an $O(\sqrt{\delta} \cdot 2^m)$ time algorithm to find the minority assignments of a $\delta$-biased $k$-PTF on $m$ variables. 

\paragraph{Base case:} Here, we make the additional observation (which~\cite{KL} do not need) that the AND of PTFs that is obtained further is \emph{small} in that it only has slightly superlinear size. Hence, we can apply another random restriction in the style of~\cite{KL} and using the minority assignment enumeration ideas, reduce it to an AND of a small (say $n^{0.1}$) number of PTFs on $n^{0.01}$ (say) variables. At this point, we can again run the linear decision tree (in a slightly more generalized form) to check satisfiability.

\section{A result of Kane, Lovett, Moran, and Zhang~\cite{KLMZ}}

\begin{definition}[Coefficient vectors.] Fix any $k,m\geq 1.$ There are exactly $r = \sum_{i=0}^k \binom{m}{i}$ many multilinear monomials of degree at most $k$. Any multilinear polynomial $P(x_1,\ldots,x_m)$ can be identified with a list of the coefficients of its monomials in lexicographic order (say) and hence with some vector $w\in \mathbb{R}^r$. We call $w$ the \emph{coefficient vector} of $P$ and use $\coeff_{m,k}(P)$ to denote this vector. When $m,k$ are clear from context, we will simply use $\coeff(P)$ instead of $\coeff_{m,k}(P).$
\end{definition}

\begin{definition}[Linear Decision Trees]
A \emph{Linear Decision Tree} for a function $f:\mathbb{R}^r \rightarrow S$ (for some set $S$) is a  decision tree where each internal node is labelled by a linear inequality of the form $\sum_{i=1}^r w_i z_i \geq \theta$ (here $z_1,\ldots,z_n$ denote the input variables). Depending on the answer to this linear inequality, computation proceeds to the left or right child of this node, and this process continues until a leaf is reached, which is labelled with an element of $S$ that is the output of $f$ on the given input.
\end{definition}

The following construction of linear decision trees due to Kane, Lovett, Moran and Zhang~\cite{KLMZ} will be crucial for us.

\begin{theorem}
\label{thm:KLM}
There is a randomized algorithm, which on input a positive integer $r$, a subset $H\subseteq \{-1,1\}^r$, and an error parameter $\varepsilon,$ produces a (random) linear decision tree $T$ of depth $\Delta = O(r\log r\cdot \log(|H|/\varepsilon))$ that computes a (random) function $F:\mathbb{R}^r\rightarrow \{-1,1\}^{|H|}\cup \{?\}$ that has the following properties.
\begin{enumerate}
\item Each linear query has coefficients in $\{-2,-1,0,1,2\}.$
\item Given as input any $w\in \mathbb{R}^r$ such that $\ip{w}{a} \neq 0$ for all $a\in \{-1,1\}^r$, $F(w)$ is either the truth table of the LTF defined by $w$ (with constant term $0$) on inputs from $H\subseteq \{-1,1\}^r$, or is equal to $?$. Further, we have $\prob{F}{F(w) = ?} \leq \varepsilon.$
\end{enumerate}
The randomized algorithm runs in time $2^{O(\Delta)}.$
\end{theorem}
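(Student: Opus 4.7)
My plan is to design an active-learning linear decision tree built exclusively from \emph{comparison queries} of the form ``is $\ip{w}{a-a'} \geq 0$?'' for pairs $a,a' \in \{-1,1\}^r$. Since $a-a'$ has entries in $\{-2,0,2\}$, such queries automatically satisfy Property~1 with $\theta = 0$. A crucial observation is that $-a \in \{-1,1\}^r$ whenever $a \in \{-1,1\}^r$, so a single comparison of $\ip{w}{a}$ against $\ip{w}{-a}=-\ip{w}{a}$ recovers $\sgn(\ip{w}{a})$ itself. Consequently, the LTF's truth table on any finite set $S \subseteq \{-1,1\}^r$ can be learned using $O(|S|\log|S|)$ comparison queries by a sort-then-locate-sign-change procedure.

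The key combinatorial ingredient is a bound on the \emph{inference dimension} of $r$-variable zero-threshold LTFs on the cube. Specifically, I aim to show that there exists $n_0 = O(r \log r)$ such that for every subset $A \subseteq \{-1,1\}^r$ with $|A| = n_0$ and a uniformly random $a \in A$, with probability at least $1/2$ every LTF consistent with the labels on $A \setminus \{a\}$ must also agree on $a$. Granting this, the algorithm is the following boosting-style iteration:
\begin{enumerate}
\item Initialize $H_0 := H$ and a label list $L := \emptyset$.
\item For $i = 1, \ldots, K := O(\log(|H|/\varepsilon))$: sample $n_0$ uniform points $S_i \subseteq H_{i-1}$, learn their LTF labels via the comparison subroutine and append to $L$, then let $H_i$ be the subset of $H_{i-1}$ whose label is not yet forced by $L$.
\item If $H_K = \emptyset$, output the truth table on $H$ (now fully determined by $L$); otherwise output $?$.
\end{enumerate}
The inference-dimension bound guarantees that $\mathbb{E}[\,|H_i| \mid H_{i-1}\,] \leq |H_{i-1}|/2$ at each step, so a Chernoff-type argument over the iterations (counting how many are ``successful'' in the sense of halving the unresolved set) forces $H_K = \emptyset$ except with probability at most $\varepsilon$. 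The total tree depth is $K \cdot O(n_0 \log n_0) = O(r \log r \cdot \log(|H|/\varepsilon))$, matching the claim. The ``is-label-forced'' check is an offline LP-feasibility question that the tree constructor performs in $2^{O(\Delta)}$ time when writing the tree out; the tree itself issues only the prescribed comparison queries.

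The main obstacle I foresee is establishing the inference-dimension bound of $O(r \log r)$. A direct Sauer--Shelah estimate based on $\text{VC-dim}(\text{LTF}) = r$ only yields the weaker $O(r^2 \log r)$ after the natural double-counting of labelings. Sharpening to $O(r \log r)$ appears to need a compression-scheme style argument: one shows that any LTF labeling of a large set $A$ is encoded by an $O(r)$-size certificate of ``tight'' points realizing the separating hyperplane, and then combines this with a random-deletion averaging step to conclude that most of the remaining points are redundant given the certificate. Getting this combinatorial inequality right is the delicate step; once it is in hand, the decision-tree construction and the $\varepsilon$-error analysis follow the outline above in a routine way.
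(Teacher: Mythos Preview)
The paper does not prove Theorem~\ref{thm:KLM}; it is quoted as a black-box result of Kane, Lovett, Moran and Zhang, with only the running-time claim justified by a pointer to a remark in that paper (see Remark~\ref{rem:algo-KLMZ}). So there is no in-paper proof to compare against; your proposal is in effect an attempt to reconstruct the KLMZ argument, and it does so faithfully: comparison queries with coefficients in $\{-2,0,2\}$, an inference-dimension bound for zero-threshold LTFs on the Boolean cube, and a halving-style active-learning loop. The obstacle you flag---sharpening the naive Sauer--Shelah $O(r^2\log r)$ to $O(r\log r)$---is precisely the combinatorial heart of their paper, and your instinct that a short certificate for the separating hyperplane drives it is in the right direction.

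Two points deserve tightening. First, your inference statement speaks only of ``LTFs consistent with the \emph{labels} on $A\setminus\{a\}$,'' but the $O(r\log r)$ bound in KLMZ is for inference from labels \emph{together with} all pairwise comparison outcomes on the sample; with labels alone the bound is not known to hold. Since your subroutine already sorts the sample via comparison queries, you have this information---you just need $L$ to record it and the ``is-label-forced'' LP to use it. Second, your depth arithmetic is off by a logarithm: with $n_0=O(r\log r)$ and $K=O(\log(|H|/\varepsilon))$, the product $K\cdot O(n_0\log n_0)$ is $O(r\log^2 r\cdot\log(|H|/\varepsilon))$, not $O(r\log r\cdot\log(|H|/\varepsilon))$. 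To match the stated $\Delta$ you need the per-round query count to be $O(n_0)$ rather than $O(n_0\log n_0)$; this is achievable because once the sample is sorted the label of every sampled point follows from a single binary-search for the sign change plus the $-a$ trick, but you should say so explicitly rather than sort and then separately label. Neither point breaks the plan.
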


\begin{remark}
\label{rem:algo-KLMZ}
The last statement in the above theorem is not formally stated in~\cite{KLMZ} but can easily be inferred from their proof and a remark~\cite[Page 363]{KLMZ} on the ``Computational Complexity'' of their procedure.\footnote{We also thank Daniel Kane (personal communication) for telling us about this.} 
\end{remark}

We will need a generalization of this theorem for evaluating (tuples of) $k$-PTFs. However, it is a simple corollary of this theorem.

\begin{corollary}
\label{cor:KLM}
Fix positive constants $k$ and $c$. Let $r = \sum_{i=0}^k\binom{m}{i} = \Theta(m^k)$ denote the number of coefficients in a degree-$k$ multilinear polynomial in $m$ variables. There is a randomized algorithm which on input positive integers $m$ and $\ell\leq m^c$ produces a (random) linear decision tree $T$ of depth $\Delta = O(\ell\cdot m^{k+1}\log m)$ that computes a (random) function $F:\mathbb{R}^{r\cdot \ell}\rightarrow \mathbb{N}\cup \{?\}$ that has the following properties.
\begin{enumerate}
\item Each linear query has coefficients in $\{-2,-1,0,1,2\}.$
\item Given as input any $\ell$-tuple of coefficient vectors $\overline{w} = (\coeff_{m,k}(P_1),\ldots,\coeff_{m,k}(P_\ell))\in \mathbb{R}^{r\cdot\ell}$ such that $P_i(a) \neq 0$ for all $a\in \{-1,1\}^m$, $F(\overline{w})$ is either the number of common satisfying assignments to all the $k$-PTFs on $\{-1,1\}^m$ sign-represented by $P_1,\ldots,P_\ell$, or is equal to $?$. Further, we have $\prob{F}{F(\overline{w}) = ?} \leq (1/2).$
\end{enumerate}
The randomized algorithm runs in time $2^{O(\Delta)}.$
\end{corollary}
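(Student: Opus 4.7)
The plan is to reduce the problem to $\ell$ separate invocations of Theorem~\ref{thm:KLM}, one per input polynomial, by viewing each degree-$k$ PTF as an LTF over its monomial-evaluation vector. For each $a \in \{-1,1\}^m$, let $\phi(a) \in \{-1,1\}^r$ be the vector whose coordinate indexed by $S \subseteq [m]$ with $|S|\leq k$ equals $\prod_{i \in S} a_i$ (with the convention $\prod_{i\in\emptyset} a_i = 1$). Then $P(a) = \ip{\coeff_{m,k}(P)}{\phi(a)}$, so $\sgn(P)$ on $\{-1,1\}^m$ is exactly the LTF (with constant term $0$) defined by $\coeff_{m,k}(P)$ on the set $H := \phi(\{-1,1\}^m) \subseteq \{-1,1\}^r$. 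Note that $\phi$ is injective (the singleton monomials recover the $a_i$), so $|H| = 2^m$, and the constant term of $P$ is absorbed into the $S=\emptyset$ coordinate, which is always $1$.

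Next, apply Theorem~\ref{thm:KLM} to this $H$ with error parameter $\varepsilon' = 1/(2\ell)$. Since $r = \Theta(m^k)$ and $\ell \leq m^c$, we have $\log r = O(\log m)$ and $\log(|H|/\varepsilon') = m + O(\log m)$, so the resulting tree has depth
\[
\Delta' \;=\; O\bigl(r \log r \cdot \log(|H|/\varepsilon')\bigr) \;=\; O(m^{k+1}\log m),
\]
queries with coefficients in $\{-2,-1,0,1,2\}$, and construction time $2^{O(\Delta')}$. Whenever the input polynomial does not vanish on $\{-1,1\}^m$, the tree returns the full truth table of $\sgn(P)$ on $\{-1,1\}^m$ except with probability at most $1/(2\ell)$.

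Finally, build $\ell$ such trees $T_1,\dots,T_\ell$ independently, where $T_i$ is the tree above applied to the block of coordinates of $\overline{w}$ corresponding to $\coeff_{m,k}(P_i)$, and concatenate them in series (splice the root of $T_{i+1}$ in place of each leaf of $T_i$). The resulting tree has depth $\ell \cdot \Delta' = O(\ell \cdot m^{k+1}\log m) = \Delta$, inherits the $\{-2,-1,0,1,2\}$ query coefficients, and is constructible in time $2^{O(\Delta)}$. At each leaf of the concatenated tree one has recorded the truth tables of all $\ell$ PTFs (unless some $T_i$ returned $?$, in which case label the leaf $?$); label the leaf by the number of $a\in\{-1,1\}^m$ at which all recorded truth tables are $-1$. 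A union bound over the $\ell$ sub-trees gives total failure probability at most $\ell\cdot (1/(2\ell)) = 1/2$, and on non-failure the output is exactly the number of common satisfying assignments, as required.

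There is no real obstacle here; the only thing to be careful about is the monomial embedding $\phi$ (verifying injectivity and that constant terms are absorbed), after which the corollary is a direct union-bound chaining of Theorem~\ref{thm:KLM}.
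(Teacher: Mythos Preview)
Your proposal is correct and follows essentially the same approach as the paper: the paper also defines the monomial-evaluation embedding (called $\mathrm{eval}_b$ there) to set $H\subseteq\{-1,1\}^r$, applies Theorem~\ref{thm:KLM} once per polynomial with error parameter $1/(2\ell)$, concatenates the $\ell$ trees in sequence, and concludes via a union bound. The only differences are cosmetic---you argue injectivity of $\phi$ to get $|H|=2^m$ where the paper is content with $|H|\leq 2^m$, and you are slightly more explicit about how the leaf labels are computed.
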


\begin{proof}
For each $b\in \{-1,1\}^m,$ define $\mathrm{eval}_b\in \{-1,1\}^r$ to be the vector of all evaluations of multilinear monomials of degree at most $k$, taken in lexicographic order, on the input $b$. Define $H\subseteq \{-1,1\}^r$ to be the set $\{\mathrm{eval}_b\ |\ b\in \{-1,1\}^m\}.$ Clearly, $|H| \leq 2^m$.
Further, note that given any polynomial $P(x_1,\ldots,x_m)$ of degree at most $k$, the truth table of the $k$-PTF sign-represented by $P$ is given by the evaluation of the LTF represented by $\coeff(P)$ at the points in $H$. Our aim, therefore, is to evaluate the LTFs corresponding to $\coeff(P_1),\ldots,\coeff(P_\ell)$ at all the points in $H$.

For each $i$, we use the randomized algorithm from Theorem~\ref{thm:KLM} to produce a decision tree $T_i$ that evaluates the Boolean function $f_i:\{-1,1\}^m\rightarrow\{-1,1\}$ sign-represented by $P_i$ (or equivalently, evaluating the LTF corresponding to $\coeff(P_i)$ at all points in $H$) with error  $\varepsilon = 1/2\ell.$ Note that $T_i$ has depth $O(m^k\log m\cdot \log(2^m/\ell)) = O(m^{k+1}\log m)$ as $\ell \leq m^c$. The final tree $T$ is obtained by simply running $T_1,\ldots,T_\ell$ in order, which is of depth $O(\ell m^{k+1}\log m).$ The tree $T$ outputs the number of common satisfying assignments to all the $f_i$ if all the $T_i$s succeed and $?$ otherwise. Since each $T_i$ outputs $?$ with probability at most $1/2\ell$, the tree $T$ outputs $?$ with probability at most $(1/2\ell)\cdot \ell  = 1/2$.

The claim about the running time follows from the analogous claim in Theorem~\ref{thm:KLM} and the fact that the number of common satisfying assignments to all the $f_i$ can be computed from the truth tables in $2^{O(m)}$ time. This completes the proof.
\end{proof}

\section{The PTF-SAT algorithm}

We are now ready to prove Theorem~\ref{thm:ptfsat-intro}. We first state the algorithm, which follows a standard memoization idea (see, e.g. \cite{Santhanam}). We assume that the input is a polynomial $P\in \mathbb{Z}[x_1,\ldots,x_n]$ of degree at most $k$ that sign-represents a Boolean function $f$ on $n$ variables. The parameters of the instance are assumed to be $(n,M)$. Set $m = n^{1/(k+1)}/\log n$.

\noindent
Algorithm $\mathbf{\mc{A}}$
\begin{enumerate}
\item\label{tree_prod} Use $n_1 = 10n$ \emph{independent runs} of the algorithm from Corollary~\ref{cor:KLM} with $\ell=1$ to construct independent random linear decision trees $T_1,\ldots,T_{n_1}$ such that on any input polynomial $Q(x_1,\ldots,x_m)$ (or more precisely $\coeff_{m,k}(Q)$) of degree at most $k$ that sign-represents an $k$-PTF $g$ on $m$ variables, each $T_i$ computes the number of satisfying assignments to $g$ with error at most $1/2$.
\item Set $N=0$. ($N$ will ultimately be the number of satisfying assignments to $f$.)
\item For each setting $\sigma\in \{-1,1\}^{n-m}$ to the variables $x_{m+1},\ldots,x_n$, do the following:
\begin{enumerate}
\item Compute the polynomial $P_\sigma$ obtained by substituting the variables $x_{m+1,\ldots,x_n}$ accordingly in $P$. 
\item\label{alg_output} Run the decision trees $T_1,\ldots,T_{n_1}$ on $\coeff(P_\sigma)$ and compute their outputs. If all the outputs are $?$, output $?$. Otherwise, some $T_i$ outputs $N_\sigma$, the number of satisfying assignments to $P_\sigma.$ Add this to the current estimate to $N$. 
\end{enumerate}
\item Output $N$.
\end{enumerate}

\paragraph{Correctness.} It is clear from Corollary~\ref{cor:KLM} and step~\ref{alg_output} that algorithm $\mc{A}$ outputs either $?$ or the correct number of satisfying assignments to $f$.  Further, we claim that with probability at least $1-1/2^{\Omega(n)},$ the output is indeed the number of satisfying assignments to $f$. To see this, observe that it follows from Corollary~\ref{cor:KLM} that for each setting $\sigma\in \{-1,1\}^{n-m}$ to the variables $x_{m+1},\ldots,x_n$, each linear decision tree $T_i$ produced in step~\ref{tree_prod} errs on $\coeff(P_\sigma)$ (i.e. outputs $?$) with probability at most $1/2$. The probability of each $T_i$ doing so is thus at most $1/2^{n_1}$, as they are constructed independently. So the probability that the algorithm fails to determine $N_\sigma$ is at most $1/2^{n_1}$. Finally, taking a union bound over all $\sigma$, which are $2^{n-m}$ in number, we conclude that the probability of algorithm $\mc{A}$ outputting $?$ is at most $2^{n-m}/2^{n_1}\leq 1/2^{\Omega(n)}$.

\paragraph{Running time.} We show that the running time of algorithm $\mc{A}$ is $\poly(n,M)\cdot 2^{n-m}$. First note that by Corollary~\ref{cor:KLM}, the construction of a single linear decision tree $T_i$ takes $2^{O(\Gamma)}$ time, where $\Gamma = m^{k+1}\log m$, and hence, step 1 takes $n_1\cdot 2^{O(\Gamma)}$ time. Next, for a setting $\sigma\in \{-1,1\}^{n-m}$ to the variables $x_{m+1},\ldots,x_n$, computing $P_\sigma$ and constructing the vector $\coeff(P_\sigma)$ takes only $\poly(n,M)$ time. Recall that the depth of each linear decision tree $T_i$ is $O(\Gamma)$ and thus, on input vector $\coeff(P_\sigma)$, each of whose entries has bit complexity at most $M$, it takes time $O(\Gamma)\cdot \poly(M,n)$ to run all $T_i$ and obtain the output $N_\sigma$ or $?$. Therefore, step 3 takes $\poly(n,M)\cdot 2^{n-m}$ time. Finally, the claim about the total running time of algorithm $\mc{A}$ follows at once when we observe that for the setting $m = n^{1/(k+1)}/\log n$, $\Gamma = o(n/ (\log n)^k) = o(n)$.

\section{Constant-depth circuits with PTF gates}
\label{sec:const-d}

In this section we give an algorithm for counting the number of satisfying assignment for a $k$-PTF circuit of constant depth and slightly superlinear size.  We begin with some definitions. 
\begin{definition}
\label{def:delta-close}
Let $\delta \leq 1$ be any parameter. Two Boolean functions $f,g$ are said to be $\delta$-close if $\prob{x}{f(x) \neq g(x)} \leq \delta$. 

A $k$-PTF $f$ specified by a polynomial $P$ is said to be $\delta$-close to an explicit constant if it is $\delta$-close to a constant and such a constant can be computed efficiently, i.e. $\poly(n,M)$, where $n$ is the number of variables in $P$ and $w(P)$ is at most $M$.

\end{definition}

\begin{definition}
\label{def:maj-min}
For a Boolean function $f:\{-1,1\}^n \rightarrow \{-1,1\}$, the majority value of $f$ is the bit value $b \in \{-1,1\}$ which maximizes $\prob{x}{f(x)=b}$ and the bit value $-b$ is said to be its minority value. 

For a Boolean function $f$ with majority value $b$, an assignment $x \in \{-1,1\}^n$ is said to be a majority assignment if $f(x)=b$ and minority assignment otherwise. 
\end{definition}

\begin{definition}
\label{def:Csigma}
Given a $k$-PTF $f$ on $n$ variables specified by a polynomial $P$, a parameter $m \leq n$ and a partial assignment $\sigma \in \{-1,1\}^{n-m}$ on $n-m$ variables, let $P_\sigma$ be the polynomial obtained by substituting the variables in $P$ according to $\sigma$. If $P$ has parameters $(n,M)$ then $P_\sigma$ has parameters $(m,M)$. For a $k$-PTF circuit $C$, $C_\sigma$ is defined similarly.  If $C$ has parameters $(n,s,d,M)$ then $C_\sigma$ has parameters $(m,s,d,M)$. 
\end{definition}

\paragraph{Outline of the \#SAT procedure.} For designing a \#SAT algorithm for $k$-PTF circuits, we use the genric framework developed by Kabanets and Lu~\cite{KL} with some crucial modifications. 

Given a $k$-PTF circuit $C$ on $n$ variables of depth $d$ we want to count the number of satisfying assignments $a \in \{-1,1\}^n$ such that $C(a)=-1$. We in fact solve a slightly more general problem. Given $(C,\mathcal{P})$, where $C$ is a small $k$-PTF circuit of depth $d$ and $\mathcal{P}$ is a set of $k$-PTF functions, such that $\sum_{f \in \mathcal{P}} \text{fan-in}(f)$ is small, we count the number of assignments that simultaneously satisfy $C$ and all the function in $\mathcal{P}$. 

At the core of the algorithm that solves this problem, Algorithm~$\mathcal{B}$, is a recursive procedure $\mathcal{A}_5$, which works as follows: on inputs $(C,\mathcal{P})$ it first applies a simplification step that outputs $\ll 2^n$ instances of the form $(C', \mathcal{P}')$ such that

\begin{itemize}
\item Both $C'$ and functions in $\mathcal{P}'$ are on $m \ll n$ variables. 
\item The sets of satisfying assignments of these instances ``almost'' partition the set of satisfying assignments of $(C,\mathcal{P})$. 
\item With all but very small probability the bottom layer of $C'$ has the following nice structure.
\begin{itemize}
\item At most $n$ gates are $\delta$-biased. We denote this set of gates by $B$ (as we will simplify them by setting them to the values they are biased towards).
\item At most $n^{\beta_d}$ gates are not $\delta$-biased. We denote these gates by $G$ (as we will simplify them by ``guessing" their values). 
\end{itemize}
\item There is a small set of satisfying assignments that are not covered by the satisfying assignments of $(C',\mc{P}')$ but we can count these assignments with a brute-force algorithm that does not take too much time.
\end{itemize}

For each $C'$ with this nice structure, then we try to use this structure to create $C''$ which has depth $d-1$. Suppose we reduce the depth as follows:

\begin{itemize}
\item Set all the gates in $B$ to the values that they are biased towards. 
\item Try all the settings of the values that the gates in $G$ can take, thereby from $C'$ creating possibly $2^{n^{\beta_d}}$ instances $(C'',\mathcal{P}')$. 
\end{itemize}
$(C'',\mathcal{P}')$ now is an instance where $C''$ has depth $d-1$. Unfortunately, by simply setting biased gates to the values they are biased towards, we may miss out on the minority assignments to these gates which could  eventually satisfy $C'$. We design a subroutine $\mathcal{A}_3$ to precisely handle this issue, i.e. to keep track of the number of minority assignments, say $N_{C'}$. This part of our algorithm is completely different from that of~\cite{KL}, which only works for subquadratically sparse PTFs. 

Once $\mathcal{A}_3$ has computed $N_{C'}$, i.e. the number of satisfying assignments among the minority assignments, we now need to only count the number of satisfying assignments among the rest of the assignments. 

To achieve this we use an idea similar to that in~\cite{CSS,KL}, which involves appending $\mathcal{P}'$ with a few more $k$-PTFs (this forces the biased gates to their majority values). This gives say a set $\tilde{\mathcal{P}}'$.  Similarly, while setting gates in $G$ to their guessed values, we again use the same idea to ensure that we are counting satisfying assignments consistent with the guessed values, once again updating $\tilde{\mathcal{P}}'$ to a new set $\mathcal{P}''$. This creates instances of the form $(C'',\mathcal{P}'')$, where the depth of $C''$ is $d-1$. 

This way, we iteratively decrease the depth of the circuit by $1$. Finally, we have instances $(C'',\mathcal{P}'')$ such that the depth of $C''$ is $1$, i.e. it is a single $k$-PTF, say $h$. At this stage we solve \#SAT$(\tilde{C})$, where $\tilde{C} = h \wedge \bigwedge_{f \in \mathcal{P}''} f$. This is handled in a subroutine $\mathcal{A}_4$. Here too our algorithm differs significantly from~\cite{KL}.

\vspace*{10pt}
In what follows we will prove Theorem~\ref{thm:ptfcktsat-intro}. In order to do so, we will set up various subroutines $\mc{A}_1,\mc{A}_2,\mc{A}_3,\mc{A}_4,\mc{A}_5$ designed to accomplish certain tasks and combine them together at the end to finally design algorithm $\mc{B}$ for the \#SAT problem for $k$-PTF circuits.

\(\mc{A}_1\) will be an oracle, used in other routines, which will compute number of common satisfying assignments for small AND of PTFs on few variables (using the same idea as in the algorithm for \#SAT for $k$-PTFs). \(\mc{A}_2\) will be a simplification step, which will allow us to argue to argue about some structure in the circuit (this algorithm is from~\cite{KL}). It will make many gates at the bottom of the circuit \(\delta\)-close to a constant, thus simplifying it. \(\mc{A}_3\) will be used to count minority satisfying assignments for a bunch of \(\delta\)-biased PTFs, i.e. assignments which cause at least one of the PTFs to evaluate to its minority value. \(\mc{A}_4\) will be a general base of case of our algorithm, which will count satisfying assignments for AND of superlinear many PTFs, by first using \(\mc{A}_2\) to simplify the circuit, then reducing it to the case of small AND of PTFs and then using \(\mc{A}_1\). \(\mc{A}_5\) will be a recursive procedure, which will use \(\mc{A}_2\) to first simplify the circuit, and then convert it into a circuit of lower depth, finally making a recursive call on the simplified circuit.

\paragraph{Parameter setting.} Let $d$ be a constant. Let $A,B$ be two fixed absolute large constants. Let \(\zeta = \min(1, A/2Bk^2)\). For each $2 \leq i \leq d$, let $\beta_{i} = A\cdot \varepsilon_{i}$ and $\varepsilon_i = (\frac{\zeta}{10A(k+1)})^i$. Choose $\beta_{1} = 1/10$ and $\varepsilon_{1}=1/10A$. \vaibhav{I think \(\varepsilon_i = \frac{\zeta^{i-1}}{(k+1) (10 A)^i}\) works too, but please verify.}

\paragraph{Oracle access to a subroutine:} Let $\mathcal{A}_1(n',s,f_1,\ldots,f_s)$ denote a subroutine with the following specification. Here, $n$ is the number of variables in the original input circuit.

\begin{itemize}
\item[] \textbf{Input:} AND of $k$-PTFs, say $f_1, \ldots, f_s$ specified by polynomials $P_1, \ldots, P_s$ respectively, such that $s \leq n^{0.1}$ and for each $i \in [s]$, $f_i$ is defined over $n'\leq n^{1/(2(k+1))}$ variables and $w(P_i)\leq M$. 
\item[]  \textbf{Output:} \#$\{a \in \{-1,1\}^{n'} \mid \forall i \in [s], P_i(a) = -1\}$.

\end{itemize}
In what follows, we will assume that we have access to the above subroutine $\mathcal{A}_1$. We will set up such an oracle and show that it answers any call to it in time $\poly(n,M)$ in Section~\ref{sec:put-together}. 

\subsection{Simplification of a $k$-PTF circuit}
\label{sec:depth-reduction}

For any $1 > \varepsilon \gg (\log n)^{-1}$, let $\beta =A \varepsilon $ and $\delta = \exp(-n^{\beta/B\cdot k^2})$, where $A$ and $B$ are constants. Note that it is these constants $A,B$ we use in the parameter settings paragraph above.  
Let $\mathcal{A}_2(C,d,n,M)$ be the following subroutine. 

\begin{itemize}
\item[] {\bf Input:} $k$-PTF circuit $C$ of depth $d$ on $n$ variables with size $n^{1+\varepsilon}$ and weight $M$. 

\item[] {\bf Output:} A decision tree $\Tdt$ of depth $n-n^{1-2\beta}$ such that for a uniformly random leaf $\sigma \in \{-1,1\}^{n-n^{1-2\beta}}$ it outputs a \emph{good} circuit $C_\sigma$ with probability $1-\exp(-n^{\varepsilon})$, where $C_\sigma$ is called good if its bottom layer has the following structure:
\begin{itemize}
\item  there are at most $n$ gates which are $\delta$-close to an explicit constant. Let $B_\sigma$ denote this set of gates. 
\item there are at most $n^{\beta}$ gates that are not $\delta$-close to an explicit constant. Let us denote this set of gates by $G_\sigma$.
\end{itemize}
\end{itemize}

In~\cite{KL}, such a subroutine $\mathcal{A}_2(C,d,n,M)$ was designed. Specifically, they proved the following theorem.

\begin{theorem}[Kabanets and Lu~\cite{KL}]
\label{thm:KL}
There is a zero-error randomized algorithm $\mathcal{A}_2(C,d,n,M)$ that runs in time $\poly(n,M)\cdot O(2^{n-n^{1-2\beta}})$ and outputs a decision tree as described above with probability at least $1-1/2^{10n}$ (and outputs $?$ otherwise). Moreover, given a good $C_\sigma$, there is a deterministic algorithm that runs in time $\poly(n,M)$ which computes $B_\sigma$ and $G_\sigma$. 
\end{theorem}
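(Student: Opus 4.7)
The plan is to follow Kabanets and Lu's construction: the decision tree $\Tdt$ is built as a derandomization of a random restriction that keeps a uniformly random set of $n^{1-2\beta}$ variables ``alive'' and assigns each remaining variable a uniformly random $\pm 1$ value. The tree queries the non-alive variables in a fixed order, so each of its $2^{n - n^{1-2\beta}}$ leaves $\sigma$ corresponds to one such restriction. It thus suffices to prove that a uniformly random $\sigma$ yields a good $C_\sigma$ except with probability $\exp(-n^\varepsilon)$; the overall construction-success bound of $1 - 2^{-10n}$ then follows from standard independent repetitions (the goodness-check itself being deterministic once $C_\sigma$ is in hand).

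The key technical step is a dichotomy for the bottom-layer gates after restriction. For each bottom gate $g$, I would apply the $k$-PTF random-restriction concentration lemma used in \cite{KKL,KL}: after our random restriction, $g$ becomes $\delta$-close to a constant except with probability at most $\exp(-n^{\Omega(\beta/k^2)})$, and the constant is efficiently computable by sampling since $\delta$ is exponentially small. A Chernoff/union bound over the at most $n^{1+\varepsilon}$ bottom gates then shows that, with probability $1 - \exp(-n^\varepsilon)$, at most $n^\beta$ bottom gates remain non-$\delta$-close (these form $G_\sigma$), the rest forming $B_\sigma$. The parameter choices $\beta = A\varepsilon$ and $\delta = \exp(-n^{\beta/(Bk^2)})$ are calibrated precisely so that the concentration threshold, the union bound over gates, and the number-of-leaves factor $2^{n-n^{1-2\beta}}$ all close simultaneously.

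Given a good $C_\sigma$, computing $B_\sigma$ and $G_\sigma$ deterministically in $\poly(n,M)$ time is straightforward: for each bottom gate, evaluate it on $O(n)$ uniformly random completions of its surviving inputs and declare it biased towards $b$ iff every sample agrees on $b$. Since $\delta$ is exponentially small in $n$, this misclassifies any single gate with probability at most $2^{-\Omega(n)}$; a union bound over all gates and leaves absorbs the $2^{-10n}$ slack. The total running time is $\poly(n,M) \cdot 2^{n - n^{1-2\beta}}$, dominated by per-leaf processing over the $2^{n - n^{1-2\beta}}$ leaves.

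The main obstacle --- and the technical heart of the argument --- is the $k$-PTF random-restriction concentration claim. Showing that a degree-$k$ multilinear polynomial concentrates near a constant under a random restriction, with \emph{both} the residual bias $\delta$ and the failure probability exponentially small in $n$, requires careful anti-concentration estimates for low-degree polynomials on the hypercube together with a hypercontractive variance-reduction analysis. Since the parameter regime here is exactly that of \cite{KL}, my plan is to invoke their theorem directly rather than reprove it, and merely verify that the parameter settings $\beta = A\varepsilon$ and $\delta = \exp(-n^{\beta/(Bk^2)})$ fall within the window their proof supports.
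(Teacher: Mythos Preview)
Your plan is essentially what the paper does: the paper gives no proof of this theorem at all, simply citing it from \cite{KL} and adding two short remarks --- (i) that the $1-2^{-10n}$ success probability comes from running the \cite{KL} procedure $10n$ times in parallel (their native error is $\leq 1/2$), and (ii) that specializing their sparse-PTF analysis to degree-$k$ PTFs yields the sharper $\delta = \exp(-n^{\beta/(Bk^2)})$ rather than the $\exp(-n^{\Omega(\beta^3)})$ in \cite{KL}. Both of these you anticipate.

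One genuine wrinkle: the theorem asserts a \emph{deterministic} $\poly(n,M)$ algorithm to compute $B_\sigma$ and $G_\sigma$ given a good leaf, but the procedure you describe --- sample $O(n)$ random completions and test for unanimous agreement --- is randomized. In the \cite{KL} framework, the partition into $B_\sigma$ versus $G_\sigma$ is determined by structural information tracked during the restriction (roughly, which gates have large enough surviving fan-in for the concentration lemma to fire), and the explicit constant a biased gate is close to is read off directly from the restricted polynomial (e.g., the sign of its constant term, which equals $\sgn(\mathbb{E}[P_\sigma])$), not estimated by sampling. You should either switch to such a deterministic test or, if you keep the sampling, be explicit that you are folding its randomness into the overall randomized construction of $\Tdt$ and adjusting the ``deterministic'' clause of the statement accordingly.
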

\begin{remark}
In~\cite{KL}, it is easy to see that the probability of outputting $?$ is at most $1/2$. To bring down this probability to $1/2^{10n}$, we run their procedure in parallel $10n$ times, and output the first tree that is output by the algorithm. The probability that no such tree is output is $1/2^{10n}.$ 
\end{remark}

\begin{remark}
In designing the above subroutine in~\cite{KL}, they consider a more general class of polynomially sparse-PTF circuits (i.e. each gate computes a PTF with polynomially many monomials) as opposed to the $k$-PTF circuits we consider here. Under this weaker assumption, they get that $\delta = \exp(-n^{\Omega(\beta^3)})$. However, by redoing their analysis for degree $k$-PTFs, it is easy to see that $\delta$ could be set to $\exp(-n^{\beta/B\cdot k^2})$ for some constant $B$. Under this setting of $\delta,$ we get exactly the same guarantees. In this sense, the above theorem statement is a slight restatement of~\cite[Theorem 3.11]{KL}.
\end{remark}

\subsection{Enumerating the minority assignments}
\label{sec:minority}

We now design an algorithm $\mathcal{A}_3(m,\ell,\delta, g_1, \ldots, g_\ell)$, which has the following behaviour. 
\begin{itemize}
\item[] {\bf Input:} parameters $m\leq n, \ell, \delta$ such that $\delta \in \left[\exp(-m^{1/10(k+1)}) ,1\right]$, $\ell \leq m^2$, $k$-PTFs $g_1, g_2, \ldots, g_\ell$ specified by polynomials $P_1, \ldots, P_\ell$ on $m$ variables ($x_1,\ldots,x_m$) each of weight at most $M$ and which are $\delta$-close to $-1$.

\item[] {\bf Oracle access to:} $\mathcal{A}_1$. 

\item[] {\bf Output:} $a \in \{-1,1\}^m$ such that $\exists i \in [\ell]$ for which $P_i(a)>0$. 

\end{itemize}

\begin{lemma}
\label{lem:min_count}
There is a deterministic algorithm $\mathcal{A}_3(m,\ell,\delta, g_1, \ldots, g_\ell)$ as specified above that runs in time $\poly(m,M)\cdot \sqrt{\delta}\cdot 2^{m}$. 
\end{lemma}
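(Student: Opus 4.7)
The plan is to split the $m$ variables into an outer block of size $m - q$ and an inner block of size $q := \lceil \tfrac12\log_2(1/\delta)\rceil$, enumerate all outer assignments, and exploit the $\delta$-bias so that on the vast majority of outer settings the restricted PTFs become identically $-1$ and can be skipped in one cheap oracle call. Since each $g_i$ is $\delta$-close to $-1$, there are at most $\delta\cdot 2^m$ inputs on which $g_i = +1$, so at most $\delta\cdot 2^m$ outer settings $\rho$ make $g_i|_\rho$ non-trivial; a union bound over $i$ gives at most $\ell\cdot \delta\cdot 2^m$ ``bad'' $\rho$'s, on which we will be able to afford brute force over the $2^q$ inner settings.

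Concretely, I would iterate over $\rho\in \{-1,1\}^{m-q}$ and, for each $\rho$, first compute the restricted polynomials $P_i|_\rho$ on $q$ variables (each of weight still $\leq M$), and then call the oracle $\mc{A}_1(q,1,g_i|_\rho)$ for every $i\in [\ell]$ to obtain $\#\mathrm{SAT}(g_i|_\rho)$. If all $\ell$ counts equal $2^q$, every inner extension of $\rho$ sends every $g_i$ to $-1$, so $\rho$ contributes no minority assignment and I skip it. Otherwise, I enumerate all $\tau\in \{-1,1\}^q$, evaluate $P_1(\rho,\tau),\ldots,P_\ell(\rho,\tau)$, and emit every $(\rho,\tau)$ for which some $P_i(\rho,\tau)>0$. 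Correctness is immediate: any minority assignment $(\rho,\tau)$ has some $g_i(\rho,\tau)=+1$, so $g_i|_\rho$ is not identically $-1$, the skip test fails, and the brute-force step finds $(\rho,\tau)$.

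The running time splits into the oracle pass, which costs $\ell\cdot 2^{m-q}$ oracle calls, and the brute-force pass, which costs at most $(\ell\cdot\delta\cdot 2^m)\cdot 2^q$ polynomial evaluations. Together this gives a bound of the form $\poly(m,M)\cdot\ell\cdot(2^{m-q}+\delta\cdot 2^{m+q})$, and my choice $q=\tfrac12\log(1/\delta)$ balances the two terms to $\poly(m,M)\cdot \sqrt{\delta}\cdot 2^m$, using $\ell\leq m^2$.

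The main (essentially routine) obstacle is checking that every oracle call is actually within the hypotheses of $\mc{A}_1$: I need $s\leq n^{0.1}$ and $n'\leq n^{1/(2(k+1))}$. The ``one PTF at a time'' choice $s=1$ trivially handles the first, and the lemma's lower bound $\delta\geq \exp(-m^{1/10(k+1)})$ is calibrated precisely so that $q\leq \tfrac12 m^{1/10(k+1)}\leq n^{1/(2(k+1))}$ (using $m\leq n$), which takes care of the second. Once these hypotheses are verified the remaining calculations are entirely mechanical.
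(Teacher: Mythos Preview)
Your proposal is correct and follows essentially the same approach as the paper: split off $q=\tfrac12\log(1/\delta)$ inner variables, use $\mc{A}_1$ with $s=1$ to test whether each restricted PTF is identically $-1$, and brute-force over the inner block only on the few ``bad'' outer assignments. The only cosmetic difference is that you bound the number of bad $\rho$'s by the direct counting observation ``$\leq \delta\cdot 2^m$ minority inputs, hence $\leq \delta\cdot 2^m$ bad $\rho$'s,'' whereas the paper phrases the same bound via Markov's inequality (obtaining $\sqrt{\delta}\cdot 2^{m-q}=\delta\cdot 2^m$); the resulting arithmetic and final $\poly(m,M)\cdot\sqrt{\delta}\cdot 2^m$ running time are identical.
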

\begin{proof}
We start with the description of the algorithm.

\noindent
\paragraph{$\mathbf{\mc{A}_3(m,\ell,\delta, g_1, \ldots, g_\ell)}$}

\begin{enumerate}
\item Set $q = \frac{1}{2}\log \frac{1}{\delta} \leq \frac{m}{2}$ and let $\mc{N} = \emptyset$. ($\mc{N}$ will eventually be the collection of minority assignments i.e. all $a\in \{-1,1\}^m$ such that $\exists i\in [\ell]$ for which $P_i(a)>0$.)
\item For each setting $\rho\in \{-1,1\}^{m-q}$ to the variables $x_{q+1},\ldots,x_m$, do the following:
\begin{enumerate}
\item Construct the restricted polynomials $P_{1,\rho},\ldots,P_{\ell,\rho}$. Let $g_{i,\rho} = \sgn(P_{i,\rho})$ for $i\in[\ell]$.
\item Using oracle $\mc{A}_1(q,1,-g_{i,\rho})$, check for each $i\in [\ell]$ if $g_{i,\rho}$ is the constant function $-1$ by checking if the output of the oracle on the input $-g_{i,\rho}$ is zero.
\item If there is an $i\in [\ell]$ such that $g_{i,\rho}$ is not the constant function $-1$,  try all possible assignments $\chi$ to the remaining $q$ variables $x_1,\ldots,x_q$. This way, enumerate all assignments $b = (\chi ,\rho)$ to $x_1,\ldots,x_m$ for which there is an $i\in[\ell]$ such that $P_i(b)>0$. Add such an assignment to the collection $\mc{N}$.
\end{enumerate}
\item Output $\mc{N}$.
\end{enumerate}

\paragraph{Correctness.}If $a\in\{-1,1\}^m$ is a minority assignment (i.e. $\exists i_0\in[\ell]$ so that $P_{i_0}(a)<0$) and if $a = (\chi,\rho)$ where $\rho$ is an assignment to the last $m-q$ variables, and $\chi$ to the first $q$, $a$ will get added to $\mc{N}$ in the loop of step 2 corresponding to $\rho$ and that of $\chi$ in step 2c, because of $i_0$ being a witness. Conversely, observe that we only add to the collection $\mc{N}$ when we encounter a minority assignment.

\paragraph{Running time.} For each setting $\rho\in \{-1,1\}^{m-q}$ to the variables $x_{q+1},\ldots,x_m$, step 2a takes $\poly(m,M)$ time and step 2b takes $O(\ell) = O(m^2)$ time and so combined, they take only $\poly(m,M)$ time. Let $\mc{T}$ be the set consisting of all assignments $\rho$ to the last $m-q$ variables such that the algorithm enters the loop described in step 2c i.e. 
\[\mc{T} = \{\rho\in \{-1,1\}^{m-q}|\exists i \in [\ell]:g_{i,\rho} \text{  is not the constant function}-1\}\]
and let $\mc{T}^c$ denote its complement. Also note that for a $\rho \in \mc{T}$, enumeration of minority assignments in step 2c takes $2^q \cdot \ell \cdot \poly(m,M)$ time. Therefore, we can bound the total running time by
\[\poly(m,M)(2^q\cdot |\mc{T}| + |\mc{T}^c|).\]
Next, we claim that the size of $\mc{T}$ is small:
\begin{claim}
$|\mc{T}|\leq \ell \cdot \sqrt{\delta}\cdot 2^{m-q}$.
\end{claim}
\begin{proof} We define for $i\in [\ell]$, $\mc{T}_i =\{\rho\in \{-1,1\}^{m-q}|g_{i,\rho} \text{  is not the constant function}-1\}$.
By the union bound, it is sufficient to show that $|\mc{T}_i| \leq \sqrt{\delta}\cdot 2^{m-q}$ for a fixed $i\in[\ell]$. Let $D_m$ denote the uniform distribution on $\{-1,1\}^m$ i.e. on all possible assignments to the variables $x_1,\ldots,x_m$. Then from the definition of $\delta$-closeness, we know
\[\Pr_{a\sim D_m}[g_i(a) = 1] \leq \delta.\]
Writing LHS in the following way, we have
\[\avg{\rho\sim D_{m-q}}{\Pr_{\chi\sim D_q}[g_{i,\rho} (\chi) = 1]} \leq \delta\]
where $D_{m-q}$ and $D_q$ denote uniform distributions on assignments to the last $m-q$ variables and the first $q$ variables respectively. By Markov's inequality,
\[\Pr_{\rho\sim D_{m-q}}[\Pr_{\chi\sim D_q}[g_{i,\rho} (\chi) = 1] \geq \sqrt{\delta}] \leq \sqrt{\delta}\]
Consider a $\rho$ for which this event does not occur i.e. for which
$\Pr_{\chi\sim D_q}[g_{i,\rho} (\chi) = 1] < \sqrt{\delta}$.
For such a $\rho$, $g_{i,\rho}$ has only $2^q = 1/\sqrt{\delta}$ many inputs and therefore, $g_{i,\rho}$ must be the constant function $-1$. Thus, we conclude that 
\[\Pr_{\rho\sim D_{m-q}}[g_{i,\rho} \text{ is not the constant function} -1] \leq \sqrt{\delta}\]
or in other words, $|\mc{T}_i| \leq \sqrt{\delta}\cdot 2^{m-q}$.
\end{proof}

Finally, by using the trivial bound $|\mc{T}^c| \leq 2^{m-q}$ and the above claim, we obtain a total running time of $\poly(m,M)\cdot\sqrt{\delta}\cdot 2^m$ and this concludes the proof of the lemma.
\end{proof}

\subsection{\#SAT for AND of $k$-PTFs}
\label{sec:base-case}

We design an algorithm $\mathcal{A}_4(n,M,g_1,\ldots,g_\tau)$ with the following functionality. 

\begin{itemize}
    \item[] \textbf{Input:} A set of $k$-PTFs $g_1, \ldots, g_\tau$ specified by polynomials $P_1, \ldots, P_\tau$ on $n$ variables such that $w(p_i) \leq M$ for each $i \in [\tau]$ and $\sum_{i \in [\tau]} \text{fan-in}(g_i) \leq n^{1+\varepsilon_1}.$
    \item[] \textbf{Output:} \#$\{a \in \{-1,1\}^n \mid \forall i \in [\tau], P_i(a)<0\}$. 
\end{itemize}

\subsubsection{The details of the algorithm}

\noindent
\paragraph{\(\mathbf{\mc{A}_4(n,M,g_1,\ldots,g_\tau)}\)} 
\begin{enumerate}
\item Let \(m = n^\alpha\) for \(\alpha = \frac{\zeta\varepsilon_1}{2(k+1)}\). Let $C$ denote the AND of $g_1, \ldots, g_\tau$. 

\item Run $\mathcal{A}_2(C,2,n,M)$ to obtain the decision tree \(\Tdt\). Initialize $N$ to $0$. 

\item For each leaf \(\sigma\) of \(\Tdt\), do the following:
\begin{enumerate}[label=(\Alph*)]
    \item \label{step:a4_1} If \(C_\sigma\) is not \emph{good}, count the number of satisfying assignments for \(C_\sigma\) by brute-force and add to \(N\).
    \item If \(C_\sigma\) is \emph{good}, do the following:
        \begin{enumerate}[label=(\roman*)]
            \item \(C_\sigma\) is now an AND of PTFs in \(B_\sigma\) and \(G_\sigma\), over \(n' = n^{1 - 2\beta_1}\) variables, where all PTFs in \(B_\sigma\) are \(\delta\)-close to an explicit constant, where $\delta = \exp(-n^{\beta_1/B\cdot k^2})$. Moreover, \(\abs{B_\sigma} \leq n, \abs{G_\sigma} \leq n^{\beta_1}\). 
Let $B_\sigma=\{h_1, \ldots, h_\ell\}$ be specified by $Q_1, \ldots, Q_\ell$. 

                Suppose for $i \in [\ell]$, $h_i$ is close to $a_i \in \{-1,1\}$. Then let $Q'_i = -a_i \cdot Q_i$ and $h_i' = \sgn(Q'_i)$. Let $B'_\sigma = \{Q'_1, \ldots, Q'_\ell\}$. 
            \item For each restriction \(\rho: \{x_{m+1}, \ldots, x_{n'}\} \to \{-1, 1\}\), do the following:
                \begin{enumerate}[label=(\alph*)]
                    \item \label{step:a4_3} Check if there exists \(h' \in B'_\sigma\) such that \(h'_\rho\) is not the constant function $-1$ using \(\mc{A}_1(m,1,h'_\rho)\).
                    \item \label{step:a4_4} If such an \(h' \in B'_{\sigma}\) exists, then count the number of satisfying assignments for \(C_{\sigma\rho}\) by brute-force and add to \(N\).
                    \item \label{step:a4_5} If the above does not hold, we have established that for each $h_i \in B_\sigma$, \(h_{i,\rho}\) is the constant function $a_i$. If $\exists i \in [\ell]$ such that $a_i =1$, it means $C_{\sigma\rho}$ is also a constant $1$ . Then simply halt. Else set each $h_i$ to $a_i$.

                        Thus, \(C_{\sigma\rho}\) has been reduced to an AND of \(n^{\beta_1}\) many PTFs over \(m\) variables. Call this set $G'_{\sigma\rho}$, use \(\mc{A}_1(m,n^{\beta_1},G'_{\sigma\rho})\) to calculate the number of satisfying assignments and add the output to \(N\).
                \end{enumerate}
        \end{enumerate}
  \end{enumerate}
\item Finally, output $N$.
\end{enumerate}
\subsubsection{The correctness argument and running time analysis}

\begin{lemma}
    \label{lem:base_case}
    \(\mc{A}_4\) is a zero-error randomized algorithm that counts the number of satisfying assignments correctly. Further, \(\mc{A}_4\) runs in time \(\poly(n, M)\cdot 2^{n - n^\alpha}\) and outputs the right answer with probability at least $1/2$ (and outputs ? otherwise). 
\end{lemma}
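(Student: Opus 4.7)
The plan is to verify the three claims of Lemma~\ref{lem:base_case} in order: correctness (and the zero-error property), the running time bound, and the $\geq 1/2$ success probability.

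For correctness, note that the leaves of $\Tdt$ together with free choices for the $n^{1 - 2\beta_1}$ unrestricted variables partition $\{-1,1\}^n$, so it suffices to check that each leaf $\sigma$ contributes exactly $\#\{\tau : C_\sigma(\tau) = -1\}$ to $N$. If $C_\sigma$ is not good, step~\ref{step:a4_1} is a direct brute force. If $C_\sigma$ is good, the algorithm iterates over $\rho \in \{-1,1\}^{n' - m}$ and handles $C_{\sigma\rho}$ in one of three ways. Step~\ref{step:a4_4} is a brute force over the remaining $m$ variables and is clearly correct. Step~\ref{step:a4_5} is reached only when the $\mc{A}_1$-checks in step~\ref{step:a4_3} have certified that every $h'_{i,\rho}$ has no satisfying assignment, i.e.\ that $h_{i,\rho}$ is identically $a_i$; if some $a_i = +1$ the AND is identically $+1$ and the branch contributes $0$, while otherwise $C_{\sigma\rho}$ reduces to $\bigwedge_{g \in G_\sigma} g_{\sigma\rho}$, whose count is produced by $\mc{A}_1$ exactly. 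The preconditions on $\mc{A}_1$ --- at most $n^{0.1} \geq n^{\beta_1}$ PTFs on at most $n^{1/(2(k+1))} \geq m = n^\alpha$ variables --- are built into the parameter choice.

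For the running time I would split the cost into three contributions. First, $\mc{A}_2$ costs $\poly(n,M)\cdot 2^{n - n^{1 - 2\beta_1}}$, comfortably below the target since $1 - 2\beta_1 > \alpha$. Second, each of the at most $2^{n - n^{1 - 2\beta_1}}$ good leaves spends $\poly(n,M)$ per restriction $\rho$ on the $\mc{A}_1$ checks of step~\ref{step:a4_3} and the final count of step~\ref{step:a4_5}, summing to $\poly(n,M)\cdot 2^{n - m} = \poly(n,M)\cdot 2^{n - n^\alpha}$. The extra $2^m$ cost incurred by the brute force in step~\ref{step:a4_4} is only paid when some $h'_{i,\rho}$ is non-constant; mirroring the Markov argument in the Claim inside the proof of Lemma~\ref{lem:min_count}, for each of the $\ell \leq n$ biased gates the fraction of $\rho$ with $\prob{\chi}{h'_{i,\rho}(\chi) = +1} \geq \sqrt{\delta}$ is at most $\sqrt{\delta}$, while for the remaining $\rho$ the bound $\sqrt{\delta}\cdot 2^m < 1$ (guaranteed by the parameter choice $m = n^\alpha$ versus $\delta = \exp(-n^{\beta_1/(Bk^2)})$) forces $h'_{i,\rho}$ to be identically $-1$. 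A union bound yields at most $n\sqrt{\delta}\cdot 2^{n' - m}$ triggering $\rho$ per good leaf, so the total cost of step~\ref{step:a4_4} is $\poly(n,M)\cdot n\sqrt{\delta}\cdot 2^n$, and the parameters arrange $n\sqrt{\delta} \leq 2^{-n^\alpha}$. Third, each bad leaf costs a brute force of $\poly(n,M)\cdot 2^{n^{1 - 2\beta_1}}$; by Markov's inequality applied to the $\mc{A}_2$ guarantee, with probability at least $2/3$ the number of bad leaves is at most $3\exp(-n^{\varepsilon_1})\cdot 2^{n - n^{1 - 2\beta_1}}$, so this contribution is at most $\poly(n,M)\cdot \exp(-n^{\varepsilon_1})\cdot 2^n \leq \poly(n,M)\cdot 2^{n - n^\alpha}$ because $\varepsilon_1 > \alpha$.

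Finally, for the failure probability, $\mc{A}_2$ itself fails with probability at most $1/2^{10n}$ and the Markov event above fails with probability at most $1/3$, so with probability at least $1/2$ both are favourable and the algorithm outputs the correct count within the time bound. To enforce zero error I would wrap the procedure in a timeout at $\poly(n,M)\cdot 2^{n - n^\alpha}$: if $\mc{A}_2$ outputs $?$ or the timeout trips, output $?$, otherwise output $N$. The main obstacle is the simultaneous tuning of $m, \alpha, \beta_1, \delta$ so that all three cost terms fit under $2^{n - n^\alpha}$ while also respecting the $\mc{A}_1$ preconditions and the inequality $2^m \leq 1/\sqrt{\delta}$ needed to convert ``low bias'' into ``identically $-1$''; this is exactly what the parameter setting at the start of Section~\ref{sec:const-d} is arranged to ensure.
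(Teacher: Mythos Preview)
Your proposal is correct and follows the paper's argument closely in structure (correctness case split; running-time contributions from $\mc{A}_2$, the per-$\rho$ oracle calls, the brute force in step~\ref{step:a4_4}, and the bad leaves). Two places are handled more circuitously than in the paper. For step~\ref{step:a4_4}, the paper does not reuse the Markov/threshold argument from the Claim in Lemma~\ref{lem:min_count}: since each $h'_i$ is $\delta$-close to $-1$, the number of full assignments in $\{-1,1\}^{n'}$ where some $h'_i$ equals $+1$ is at most $\ell\delta\cdot 2^{n'}$, and every $\rho$ that triggers step~\ref{step:a4_4} is the projection of at least one such assignment; hence there are at most $\ell\delta\cdot 2^{n'}$ triggering $\rho$, with no $\sqrt{\delta}$ loss and no need for the side condition $2^m < 1/\sqrt{\delta}$. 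For the bad-leaf cost, your Markov step and the resulting timeout are unnecessary: the specification of $\mc{A}_2$ says that the \emph{output tree itself} has at most an $\exp(-n^{\varepsilon_1})$ fraction of bad leaves (the probability there is over a uniformly random $\sigma$, not over the coins of $\mc{A}_2$), so once $\Tdt$ is produced the total brute-force cost of step~\ref{step:a4_1} is deterministically $O^*(2^{n-n^{\varepsilon_1}})$. Thus the only source of failure is $\mc{A}_2$ outputting $?$, giving error probability at most $1/2^{10n}$ rather than merely $1/2$.
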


\begin{proof}
\textbf{Correctness.} For a leaf \(\sigma\) of \(\Tdt\), when \(C_\sigma\) is not \emph{good}, we simply use brute-force, which is guaranteed to be correct. Otherwise, 
            \begin{itemize}
                \item If \(h'_\rho \) not the constant function $-1$ for some \(h' \in B'_\sigma\), then we again use brute-force, which is guaranteed to work correctly.
                \item Otherwise, for each \(h' \in B'_\sigma\), $h'_\rho$ is the constant function \(-1\). Here we only need to consider the satisfying assignments for the gates in \(G_{\sigma\rho}\). For this we use \(\mc{A}_1\), that works correctly by assumption.
    \end{itemize}

    Further, we need to ensure that the parameters that we call $\mc{A}_1$ on, are valid. 
To see this, observe that $m = n^\alpha \leq n^{1/(2(k+1))}$ because of the setting of $\alpha$ and further, we have $n^{\beta_1} \leq n^{0.1}$.

Finally, the claim about the error probability follows from the error probability of $\mc{A}_2$ (Theorem~\ref{thm:KL}).

\paragraph{Running Time.} The time taken for constructing \(\Tdt\) is \(O^*(2^{n - n^{1 - 2\beta_1}})\), by Theorem~\ref{thm:KL}. For a leaf $\sigma$ of $\Tdt$,  we know that step~\ref{step:a4_1} is executed with probability at most \(2^{-n^{\varepsilon_1}}\). The total time for running step~\ref{step:a4_1} is thus \(O^*(2^{n - n^{\varepsilon_1}})\). 

We know that the oracle \(\mc{A}_1\) answers calls in $\poly(n,M)$ time.  Hence, the total time for running step~\ref{step:a4_3} is \(O^*(2^{n - n^\alpha})\). Next, note that if step~\ref{step:a4_4} is executed, then all PTFs in \(B_\sigma\) are \(\delta\)-close to \(-1\). So, the number of times it runs is at most \(\delta\cdot 2^{n'}\). Therefore, the total time for running step~\ref{step:a4_4} is \(O^*(2^{n + n^\alpha - n^{\beta_1/Bk^2}})\). Similar to the analysis of step~\ref{step:a4_3}, the total time for running  step~\ref{step:a4_5} is also \(O^*(2^{n - n^\alpha})\).

    Summing them up, we conclude that total running time is \(O^*(2^{n - n^\alpha})\), as due to our choice of various parameters, $n - n^\alpha$ is the dominating power of \(2\). This completes the proof.
\end{proof}

\subsection{\#SAT for larger depth $k$-PTF circuits}
Let $C$ be a $k$-PTF circuit of depth $d\geq 1$ on $n$ variables and let $\mathcal{P}$ be a set of $k$-PTFs $g_1, \ldots, g_\tau$, which are specified by $n$-variate polynomials $P_1, \ldots, P_\tau$. Let \#SAT$(C,\mathcal{P})$ denote \#$\{a \in \{-1,1\}^n \mid C(a)<0 \text{ and } \forall i \in [\tau], P_i(a)<0 \}$. We now specify our depth-reduction algorithm $\mathcal{A}_5(n,d,M,n^{1+\varepsilon_d}, C, \mathcal{P})$. 

\begin{itemize}
\item[] {\bf Input:} $(C, \mathcal{P})$ as follows:
\begin{itemize}
\item $k$-PTF circuit $C$ with parameters $(n,n^{1+\varepsilon_d},d,M)$.
\item a set $\mathcal{P}$ of $k$-PTFs $g_1, \ldots, g_\tau$ on $n$ variables, which are specified by polynomials $P_1, \ldots, P_\tau$ such that $\sum_{i=1}^\tau \text{fan-in}(g_i) \leq n^{1+\varepsilon_{d}}$ and for each $i \in [\tau]$, $w(P_i)\leq M$. 
\end{itemize}
\item[] {\bf Oracle access to:} $\mathcal{A}_1, \mathcal{A}_4$. 

\item[] {\bf Output:} \#SAT$(C,\mathcal{P})$.

\end{itemize}

We start by describing the algorithm. 
\subsubsection{The details of the algorithm.}

Let \texttt{count} be a global counter initialized to $0$ before the execution of the algorithm. 
\paragraph{$\mathbf{\mathcal{A}_5(n,d,M,n^{1+\varepsilon_d},C,\mathcal{P})}$} 
\begin{enumerate}%[label=(\Roman*)]
\item \label{step:a5_0} If $d=1$, output $\mc{A}_4(n,M,\{C\}\cup \mc{P})$ and halt.
    \item \label{step:a5_1} Run $\mathcal{A}_2(C,d,n,M)$, which gives us a $\Tdt$. (If not, output ?.)
    \item For each leaf $\sigma \in \{-1,1\}^{n-n^{1-2\beta_{d}}}$ of $\Tdt$.  
        \begin{enumerate}%[label=\arabic*.]
        \item For each $i \in [\tau]$ compute $P_{i,\sigma}$, the polynomial obtained by substituting $\sigma$ in its variables. Let $\mathcal{P}_\sigma = \{P_{1,\sigma}, \ldots, P_{\tau, \sigma}\}$. 
            \item \label{step:a5_2} Obtain $C_\sigma$. If $C_\sigma$ is not a good circuit, then brute-force to find the number of satisfying assignments of $(C_\sigma,\mc{P}_\sigma)$, say $N_\sigma$, and set $\texttt{count} = \texttt{count} + N_\sigma$.  
            \item If $C_\sigma$ is good then obtain $B_\sigma$ and $G_\sigma$. 
            \item Let $B_\sigma=\{h_1, \ldots, h_\ell\}$ be specified by $Q_1, \ldots, Q_\ell$. We know that each $h\in B_\sigma$ is $\delta$-close to an explicit constant, for $\delta = \exp(-n^{\beta_d/Bk^2}).$ 

                Suppose for $i \in [\ell]$, $h_i$ is close to $a_i \in \{-1,1\}$. Then let $Q'_i = -a_i \cdot Q_i$ and $h_i' = \sgn(Q'_i)$. Let $B'_\sigma = \{Q'_1, \ldots, Q'_\ell\}$. 

            \item \label{step:a5_3} Run $\mathcal{A}_3(n^{1-2\beta_{d}},\ell, \delta,h_1', \ldots, h_\ell')$ to obtain the set $\mc{N}_\sigma$  of all the minority assignments of $B_\sigma$. (Note that this uses oracle access to $\mathcal{A}_1$.)
                \begin{itemize}
                    \item[] for each $a\in \mc{N}_\sigma$, if ($(C(a)<0)$ AND $(\forall i \in [\ell]$, $P_i(a)<0)$), then $\texttt{count} = \texttt{count}+1$. 
                \end{itemize}

            \item \label{step:a5_3.5} Let $G_\sigma = \{f_1, \ldots, f_t\}$ be specified by polynomials $R_1, \ldots, R_t$. We know that $t \leq n^{\beta_{d}}$. 

                For each $b \in \{-1,1\}^t$, 
                \begin{enumerate}%[label=(\roman*)]
                    \item Let $R_i' = -b_i \cdot R_i$ for $i \in [t]$. Let $G'_{\sigma,b} = \{R'_1, \ldots, R'_t\}$. 
                    \item Let $C_{\sigma,b}$ be the circuit obtained from $C_\sigma$ by replacing each $h_i$ by $a_i$ $1 \leq i \leq \ell$ and each $f_j$ by $b_j$ for $1\leq j \leq t$. 
                    \item $\mathcal{P}_{\sigma,b} = \mathcal{P}_\sigma \cup B'_\sigma \cup G'_{\sigma,b}$. 
                    \item \label{step:a5_4} If $d > 2$ then run $\mathcal{A}_5(n^{1-2\beta_d}, d-1, M, n^{1 + \varepsilon_d}, C_{\sigma, b}, \mathcal{P}_{\sigma,b} )$ $n_1=10n$ times and let $N_\sigma$ be the output of the first run that does not output ?. Set $\texttt{count} = \texttt{count} + N_\sigma$. (If all runs of $\mc{A}_5$ output ?, then output ?.)
                    \item \label{step:a5_5} If $d=2$ then run $\mathcal{A}_4(n^{1-2\beta_d}, M, C_{\sigma, b} \cup \mathcal{P}_{\sigma,b})$ $n_1=10n$ times and let $N_\sigma$ be the output of the first run that does not output ?. Set $\texttt{count} = \texttt{count} + N_\sigma.$ (If all runs of $\mc{A}_5$ output ?, then output ?.)
                \end{enumerate} 
        \end{enumerate}
    \item Output \texttt{count}.
\end{enumerate}

\subsubsection{The correctness argument and running time analysis}

\begin{lemma}
    \label{lem:ckt_sat}
    The algorithm \(\mc{A}_5\) described above is a zero-error randomized algorithm which  on input \((C, \mc{P})\) as described above, correctly \#SAT$(C,\mc{P})$. Moreover, the algorithm outputs the correct answer (and not ?) with probability at least $1/2$. Finally, \(\mc{A}_5(n,d,M,n^{1+\varepsilon_d},C,\emptyset)\) runs in time \(\poly(n,M)\cdot 2^{n - n^{\zeta \varepsilon_d/2(k+1)}}\), where parameters $\varepsilon_d, \zeta$ are as defined at the beginning of Section~\ref{sec:const-d}. 
\end{lemma}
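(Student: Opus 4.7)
The plan is to prove correctness, the success probability bound, and the running time bound simultaneously by induction on the depth $d$, with the base case $d=1$ handled directly by the call to $\mc{A}_4$ (invoking Lemma~\ref{lem:base_case}). For the inductive step $d \geq 2$, the key conceptual point to verify is that for every leaf $\sigma$ of $\Tdt$ at which $C_\sigma$ is \emph{good}, the satisfying assignments of $(C_\sigma, \mc{P}_\sigma)$ decompose into two disjoint pieces:~(a) those in $\mc{N}_\sigma$, i.e., assignments that force at least one biased gate $h_i$ to its minority value $-a_i$, which are enumerated by $\mc{A}_3$ (Lemma~\ref{lem:min_count}) and then filtered against $C$ and $\mc{P}$ by direct evaluation; and~(b) those on which every $h_i$ takes its majority value $a_i$, which further partition over the value profile $b \in \{-1,1\}^t$ of the guessed gates in $G_\sigma$. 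For each $b$, a satisfying assignment with that profile is counted exactly once by the invocation on $(C_{\sigma,b}, \mc{P}_{\sigma,b})$, because the sign-flipped polynomials appended in $B'_\sigma$ and $G'_{\sigma,b}$ precisely enforce $h_i = a_i$ and $f_j = b_j$. Leaves where $C_\sigma$ is not good are handled by brute force, which is trivially correct.

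For the success probability, the algorithm outputs $?$ only if either $\mc{A}_2$ fails at the top level (probability $\leq 1/2^{10n}$ by Theorem~\ref{thm:KL}) or all $n_1 = 10n$ independent trials of a recursive subcall fail. Since each trial succeeds with probability $\geq 1/2$ by the inductive hypothesis (or by Lemma~\ref{lem:base_case} when $d=2$), the failure probability at each recursion site is at most $2^{-10n}$, and a union bound over at most $\text{(running time)}$ many sites still leaves the overall error well below $1/2$.

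For the running time, I write $\alpha_d := \zeta\varepsilon_d / (2(k+1))$ and use the inductive hypothesis that the recursive call $\mc{A}_5$ on $n' = n^{1-2\beta_d}$ variables at depth $d-1$ runs in time $\poly(n', M) \cdot 2^{n' - (n')^{\alpha_{d-1}}}$. Adding up contributions: (i) $\mc{A}_2$ costs $\poly(n,M)\cdot 2^{n - n^{1-2\beta_d}}$; (ii) brute force at non-good leaves costs $\poly(n,M)\cdot 2^{n - n^{\varepsilon_d}}$ since a uniformly random leaf is non-good with probability $\leq \exp(-n^{\varepsilon_d})$; (iii) the $\mc{A}_3$ enumeration over all leaves costs $\poly(n,M) \cdot \sqrt{\delta} \cdot 2^n = \poly(n,M)\cdot 2^{n - n^{\beta_d/(2Bk^2)}/2}$; (iv) the recursive block costs at most
\[
\poly(n,M)\cdot 2^{n - n^{1-2\beta_d}} \cdot 2^{n^{\beta_d}} \cdot n_1 \cdot 2^{n^{1-2\beta_d} - (n^{1-2\beta_d})^{\alpha_{d-1}}} \;=\; \poly(n,M)\cdot 2^{n + n^{\beta_d} - n^{(1-2\beta_d)\alpha_{d-1}}}.
\]

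The main obstacle, as usual in such layered arguments, will be to verify that the parameter choices close the induction, i.e., that the exponent in (iv) dominates all others and is itself at most $n - n^{\alpha_d}$. This reduces to checking that $(1-2\beta_d)\alpha_{d-1} \geq \alpha_d$ (with enough slack to absorb the $n^{\beta_d}$ additive loss) and that $\beta_d/(2Bk^2)$ and $\varepsilon_d$ also exceed $\alpha_d$. Using $\varepsilon_{d-1}/\varepsilon_d = 10A(k+1)/\zeta$ one obtains $\alpha_{d-1}/\alpha_d = 10A(k+1)/\zeta$ and $\alpha_{d-1}/\beta_d = 5$, so that both $(1-2\beta_d)\alpha_{d-1} \gg \alpha_d$ and $n^{(1-2\beta_d)\alpha_{d-1}} \gg n^{\beta_d}$ for sufficiently large $n$; the other two dominance conditions are immediate from $\zeta \leq 1$ and $A,B$ being sufficiently large. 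The base case of the running-time induction (namely $d=2$) is handled identically, except that the inductive call is replaced by the $\mc{A}_4$ call, whose $\poly(n',M)\cdot 2^{n' - (n')^{\alpha}}$ guarantee from Lemma~\ref{lem:base_case} plays the role of $(n')^{\alpha_{d-1}}$ with $\alpha = \zeta\varepsilon_1/(2(k+1))$, and the same recurrence check applies.
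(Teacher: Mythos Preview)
Your proposal is correct and follows essentially the same inductive scheme as the paper's proof: correctness via the minority/majority decomposition at each good leaf, success probability via independence of the $10n$ repeated trials and a union bound over at most $2^n$ recursion sites, and a four-term running-time recurrence closed by the parameter arithmetic you outline. The one bookkeeping step you omit (which the paper spells out) is checking that the recursive input $(C_{\sigma,b},\mc{P}_{\sigma,b})$ actually meets the size and fan-in preconditions of $\mc{A}_5$ at depth $d-1$ on $n^{1-2\beta_d}$ variables, namely that both $n^{1+\varepsilon_d}$ and $2n^{1+\varepsilon_d}$ are at most $(n^{1-2\beta_d})^{1+\varepsilon_{d-1}}$; this is the same parameter check you already do for the running time and follows from $\varepsilon_{d-1}/\varepsilon_d = 10A(k+1)/\zeta$.
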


\begin{proof}
 We argue correctness by induction on the depth $d$ of the circuit $C$.
 
 Clearly, if $d=1,$ correctness follows from the correctness of algorithm $\mc{A}_4.$ This takes care of the base case.
 
 If $d\geq 2$, we argue first that if the algorithm does not output ?, then it does output \#SAT$(C,\mc{P})$ correctly. Assume that the algorithm $\mc{A}_2$ outputs a decision tree $\Tdt$ as required (otherwise, the algorithm outputs ? and we are done). Now, it is sufficient to argue that for each $\sigma,$ the number of satisfying assignments to $(C_\sigma,\mc{P}_\sigma)$ is computed correctly (if the algorithm does not output ?).
 
 Fix any $\sigma.$ If $C_\sigma$ is not a good circuit, then the algorithm uses brute-force to compute \#SAT$(C_\sigma,\mc{P}_\sigma)$ which yields the right answer. So we may assume that $C_\sigma$ is indeed good. 
 
 Now, the satisfying assignments to $(C_\sigma,\mc{P}_\sigma)$ break into two kinds: those that are minority assignments to the set $B_\sigma$ and those that are majority assignments to $B_\sigma.$ The former set is enumerated in Step~\ref{step:a5_3} (correctly by our analysis of $\mc{A}_3$) and hence we count all these assignments in this step. 
 
 Finally, we claim that the satisfying assignments to $(C_\sigma,\mc{P}_\sigma)$ that are majority assignments of all gates in $B_\sigma$ are counted in Step~\ref{step:a5_3.5}. To see this, note that each such assignment $a\in \{-1,1\}^{n^{1-2\beta_d}}$ forces the gates in $G_\sigma$ to some values $b_1,\ldots,b_t\in \{-1,1\}$. Note that for each such $b\in \{-1,1\}^t$, these assignments are exactly the satisfying assignments of the pair $(C_{\sigma,b},\mc{P}_{\sigma,b})$ as defined in the algorithm.  In particular, the number satisfying assignments to $(C_\sigma, \mathcal{P}_\sigma)$ that are majority assignments of all gates in $B_\sigma$ can be written as 
\[
 \sum_{b\in \{-1,1\}^t} \text{\#SAT}(C_{\sigma,b},\mc{P}_{\sigma,b}).
\]
 We now want to apply the induction hypothesis to argue that all the terms in the sum are computed correctly. To do this, we need to argue that the size of $C_{\sigma,b}$ and the total fan-in of the gates in $\mc{P}_{\sigma,b}$ are bounded as required (note that the total size of $C$ remains the same, while the total fan-in of $\mc{P}$ increases by the total fan-in of the gates in $B_\sigma'\cup G_{\sigma,b}'$ which is at most $n^{1+\varepsilon_d}$). It can be checked that this boils down to the following two inequalities
 \[
 n^{(1-2\beta_d)(1+\varepsilon_{d-1})}\geq n^{1+\varepsilon_{d}}\text{  and   } n^{(1-2\beta_d)(1+\varepsilon_{d-1})}\leq 2n^{1+\varepsilon_{d}} 
 \]
 both of which are easily verified for our choice of parameters (for large enough $n$). Thus, by the induction hypothesis, all the terms in the sum are computed correctly (unless we get ?). Hence, the output of the algorithm is correct by induction.
 
 Now, we analyze the probability of error. If $d=1$, the probability of error is at most $1/2$ by the analysis of $\mc{A}_4.$ If $d > 2$, we get an error if either $\mc{A}_2$ outputs ? or there is some $\sigma$ such that the corresponding runs of $\mc{A}_5$ or $\mc{A}_4$ output ?. The probability of each is at most $1/2^{10n}$. Taking a union bound over at most $2^n$ many $\sigma,$ we see that the probability of error is at most $1/2^{\Omega(n)}\leq 1/2.$
 
 Finally, we analyze the running time. Define $\mc{T}(n,d,M)$ to be the running time of the algorithm on a pair $(C,\mc{P})$ as specified in the input description above. We need the following claim.
 
 \begin{claim}
 \label{clm:rtA5}
 $\mc{T}(n,d,M)\leq \poly(n,M)\cdot 2^{n-n^{\zeta \varepsilon_d/2(k+1)}}.$
 \end{claim}
 
 To see the above, we argue again by induction. The case $d=1$ follows from the running time of $\mc{A}_4.$ Further from the description of the algorithm, we get the following inequality for $d\geq 2.$
 \begin{equation}\label{eq:rta5}
 \mc{T}(n,d,M)\leq \poly(n,M)\cdot (2^{n-n^{1-2\beta_d}} + 2^{n-n^{\varepsilon_d}} + 2^{n-\frac{1}{2}\cdot n^{-\beta_d/(Bk^2)}} + 2^{n-n^{(1-2\beta_d)\zeta\varepsilon_{d-1}/2(k+1)}})
 \end{equation}
 The first term above accounts for the running time of $\mc{A}_2$ and all steps other Steps~ \ref{step:a5_2},\ref{step:a5_3} and \ref{step:a5_3.5}. The second term accounts for the brute force search in Step~\ref{step:a5_2} since there are only a $2^{-n^{\varepsilon_d}}$ fraction of $\sigma$ where it is performed. The third term accounts for the minority enumeration algorithm in Step~\ref{step:a5_3} (running time follows from the running time of that algorithm). The last term is the running time of Step~\ref{step:a5_3.5} and follows from the induction hypothesis. 
 
 It suffices to argue that each term in the RHS of (\ref{eq:rta5}) can be bounded by $2^{n-n^{\zeta\varepsilon_d/2(k+1)}}.$ This is an easy verification from our choice of parameters and left to the reader. This concludes the proof.
\end{proof}
\subsection{Putting it together}
\label{sec:put-together}
In this subsection, we complete the proof of Theorem~\ref{thm:ptfcktsat-intro} using the aforementioned subroutines. We also need to describe the subroutine $\mc{A}_1$, which is critical for all the other subroutines. We shall do so \emph{inside} our final algorithm for the \#SAT problem for $k$-PTF circuits, algorithm $\mc{B}$. Recall that $\mc{A}_1$ has the following specifications:

\begin{itemize}
\item[] \textbf{Input:} AND of $k$-PTFs, say $f_1, \ldots, f_s$ specified by polynomials $P_1, \ldots, P_s$ respectively, such that $s \leq n^{0.1}$ and for each $i \in [s]$, $f_i$ is defined over $n'\leq n^{1/(2(k+1))}$ variables and $w(P_i)\leq M$.
\item[]  \textbf{Output:} \#$\{a \in \{-1,1\}^{n'} \mid \forall i \in [s], f_i(a) = -1\}$. 
\end{itemize}

We are now ready to complete the proof of Theorem~\ref{thm:ptfcktsat-intro}. Suppose $C$ is the input $k$-PTF circuit with parameters $(n,n^{1+\varepsilon_d},d,M)$. On these input parameters $(C,n,n^{1+\varepsilon_d},d,k,M)$, we finally have the following algorithm for the \#SAT problem for $k$-PTF circuits:

\paragraph{$\mathbf{\mc{B}(C,n,n^{1+\varepsilon_d},d,k,M)}$}
\begin{enumerate}
\item\label{step:const} (\emph{Oracle Construction Step}) Construct the oracle $\mc{A}_1$ as follows.  Use $n_1 = 10n$ \emph{independent runs} of the algorithm from Corollary~\ref{cor:KLM}, with $\ell$ chosen to be $ n^{0.1}$ and $m$ to be $n^{1/2(k+1)}$, to construct independent random linear decision trees $T_1,\ldots,T_{n_1}$ such that on any input $\overline{w} = (\coeff_{m,k}(Q_1),\ldots,\coeff_{m,k}(Q_\ell))\in \mathbb{R}^{r\cdot \ell}$ (where $Q_i$s are polynomials of degree at most $k$ that sign-represent $k$-PTFs $g_i$, each on $m$ variables), each $T_i$ computes the number of common satisfying assignments to $g_1,\ldots,g_\ell$ with error at most $1/2$. 
\item Run $\mc{A}_5(n,d,M,n^{1+\varepsilon_d},C,\emptyset)$. For an internal call to $\mc{A}_1$, say on parameters $(n',s,f_1,\ldots,f_s)$ where $n'\leq m$ and $s\leq \ell$, do the following:
\begin{enumerate}
\item\label{step:a_1-1} Run each $T_i$ on the input $\overline{w} = (\coeff_{n',k}(P_1),\ldots,\coeff_{n',k}(P_s))\in \mathbb{R}^{r\cdot s}$. (We expand out the coefficient vectors with dummy variables so that they depend on exactly $m$ variables. Similarly, using some dummy polynomials, we can assume that there are exactly $\ell$ polynomials.)
\item\label{step:a_1-2} If some $T_i$ outputs the number of common satisfying assignments to $f_1,\ldots,f_s$, then output that. Otherwise, if all $T_i$ output $?$, then output $?$.
\end{enumerate}
\end{enumerate}

\begin{lemma}\label{lem:A_1const}
The construction of the zero-error randomized oracle $\mc{A}_1$ in the above algorithm takes $2^{O(n^{0.6})}$ time. Once constructed, the oracle $\mc{A}_1$ answers any call (with the correct parameters) in $\poly(n,M)$ time with error at most $1/2^{10n}$.
\end{lemma}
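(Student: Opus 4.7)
The plan is to directly unpack the two parts of the lemma using Corollary~\ref{cor:KLM}, much as in the analysis of Algorithm~$\mc{A}$ in Section~3. First I would compute the depth of each individual tree $T_i$ produced in Step~\ref{step:const}: with the choices $\ell = n^{0.1}$ and $m = n^{1/(2(k+1))}$, Corollary~\ref{cor:KLM} gives
\[
\Delta \;=\; O(\ell \cdot m^{k+1}\log m) \;=\; O(n^{0.1}\cdot n^{1/2}\cdot \log n) \;=\; \tilde{O}(n^{0.6}).
\]
Since Corollary~\ref{cor:KLM} also guarantees that the randomized algorithm producing $T_i$ runs in time $2^{O(\Delta)}$, constructing a single tree takes $2^{O(n^{0.6})}$ time, and the $n_1 = 10n$ independent runs together take $10n \cdot 2^{O(n^{0.6})} = 2^{O(n^{0.6})}$ time, establishing the first claim.

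For the second claim (query time and error), I would analyze what happens on a call $\mc{A}_1(n',s,f_1,\ldots,f_s)$ with $n'\leq m$ and $s\leq \ell$. After padding the coefficient vectors with zeros (for dummy variables) and appending dummy polynomials so that there are exactly $\ell$ polynomials on exactly $m$ variables, the resulting input $\overline{w}$ lies in $\mathbb{R}^{r\ell}$. Each tree $T_i$ has depth $\tilde{O}(n^{0.6})$, and each internal query is a linear inequality with coefficients in $\{-2,-1,0,1,2\}$ on entries of bit-complexity at most $M$, so a single query takes $\poly(m,M) \leq \poly(n,M)$ time. Running a single $T_i$ thus takes $\tilde{O}(n^{0.6})\cdot \poly(n,M) = \poly(n,M)$ time, and running all $n_1 = 10n$ trees together still takes $\poly(n,M)$ time. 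When some $T_i$ outputs a number rather than $?$, the correctness part of Corollary~\ref{cor:KLM} guarantees it is the correct count of common satisfying assignments, which confirms the zero-error property.

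Finally, for the error bound, Corollary~\ref{cor:KLM} gives that each tree $T_i$ independently outputs $?$ with probability at most $1/2$. Since the $T_i$ are constructed using independent runs of the randomized algorithm, the probability that every one of them outputs $?$ on the particular input $\overline{w}$ is at most $(1/2)^{n_1} = 1/2^{10n}$. This is exactly the error probability with which $\mc{A}_1$ outputs $?$ on any single call, completing the proof.

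Since the main tools (Corollary~\ref{cor:KLM} for tree depth, construction time, and per-tree error, together with independence of the $n_1$ runs) are already in place, there is no real obstacle here; the only point requiring mild care is verifying that the chosen parameters $\ell = n^{0.1}$ and $m = n^{1/(2(k+1))}$ satisfy the hypothesis $\ell \leq m^c$ of Corollary~\ref{cor:KLM} for some constant $c$ depending on $k$, which holds since $m^{2(k+1)} = n \geq n^{0.1}$.
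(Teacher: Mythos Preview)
Your proposal is correct and follows essentially the same approach as the paper's own proof: both simply substitute $\ell = n^{0.1}$ and $m = n^{1/(2(k+1))}$ into Corollary~\ref{cor:KLM} to bound the tree depth and construction time, and then use independence of the $n_1 = 10n$ runs to get the $1/2^{10n}$ error bound. Your write-up is in fact more detailed than the paper's (you explicitly compute $\Delta = \tilde{O}(n^{0.6})$, justify the $\poly(n,M)$ per-query cost via the bounded query coefficients, and verify the hypothesis $\ell \leq m^c$), but the underlying argument is identical.
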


\begin{proof}
\textbf{Correctness.} It is clear from Corollary~\ref{cor:KLM} that algorithm $\mc{A}_1$ outputs either $?$ or the correct number of common satisfying assignments to $f_1,\ldots,f_s$. Further, as the $T_i$s in step~\ref{step:const} are constructed independently, it follows that with probability at least $1 - 1/2^{10n}$, the algorithm indeed outputs the number of common satisfying assignments to $f_1,\ldots,f_s$.

\textbf{Running Time.} Substituting the parameters $\ell= n^{0.1}$ and $m =  n^{1/(2(k+1))}$ in Corollary~\ref{cor:KLM}, we see that the construction of $\mc{A}_1$ (step~\ref{step:const}) takes $n_1\cdot 2^{n^{0.6}}$ time. Also, the claimed running time of answering a call follows upon observing that steps~\ref{step:a_1-1} and~\ref{step:a_1-2} combined take only $\poly(n,M)$ time to execute.
\end{proof}

With the correctness of $\mc{A}_1$ now firmly established, we finally argue the correctness and running time of algorithm $\mc{B}$.

\paragraph{Correctness.} The correctness of $\mc{B}$ follows from that of $\mc{A}_1,\mc{A}_2,\mc{A}_3,\mc{A}_4,$ and $\mc{A}_5$ (see Lemma~\ref{lem:A_1const}, Theorem~\ref{thm:KL}, Lemmas~\ref{lem:min_count},~\ref{lem:base_case}, and~\ref{lem:ckt_sat} respectively). Furthermore, if the algorithm $\mc{A}_1$ is assumed to have no error at all, then from the analysis of $\mc{A}_5,$ we see that the probability of error in $\mc{B}$ is at most $1/2$. However, as algorithm $\mc{A}_1$ is itself randomized, we still need to bound the probability that any of the calls made to $\mc{A}_1$ produce an undesirable output (i.e. an output of $?$). To this end, first note that as the running time of $\mc{A}_5$ is bounded by $2^n$, the number of calls to $\mc{A}_1$ is also bounded by $2^n$. But by Theorem~\ref{thm:KL} and Lemma~\ref{lem:A_1const}, the probability of $\mc{A}_1$ outputting $?$ is bounded by $1/2^{10n}$. Therefore, by the union bound, algorithm $\mc{B}$ correctly outputs the number of satisfying assignments to the input circuit $C$ with probability at least $1/2-1/2^{\Omega(n)}\geq 1/4$.

\paragraph{Running Time.} By Lemma~\ref{lem:ckt_sat} and~\ref{lem:A_1const}, the running time of $\mc{B}$ will be $2^{O(n^{0.6})}+\poly(n,M)\cdot 2^{n - n^{\zeta \varepsilon_d/2(k+1)}}$. Thus, the final running time is $\poly(n,M)\cdot 2^{n-S}$ where $S = n^{\zeta\varepsilon_{d}/2(k+1)}$ and where $\varepsilon_{d}>0$ is a constant depending only on $k$ and $d$. Setting $\varepsilon_{k,d} = \zeta\varepsilon_d/2(k+1)$ gives the statement of Theorem~\ref{thm:ptfcktsat-intro}.

\section{Acknowledgements}

We are grateful to Valentine Kabanets for telling us about this problem. We also thank Paul Beame, Prahladh Harsha, Valentine Kabanets, Ryan O'Donnell, Rahul Santhanam, Madhu Sudan, Avishay Tal and Ryan Williams for valuable discussions and comments.

\bibliographystyle{alpha}
\bibliography{references}{}

\begin{thebibliography}{AHWW16}

\bibitem[AB18]{AB}
Amir Abboud and Karl Bringmann.
\newblock Tighter connections between formula-sat and shaving logs.
\newblock In {\em 45th International Colloquium on Automata, Languages, and
  Programming, {ICALP} 2018, July 9-13, 2018, Prague, Czech Republic}, pages
  8:1--8:18, 2018.

\bibitem[AHWW16]{AHWW}
Amir Abboud, Thomas~Dueholm Hansen, Virginia~Vassilevska Williams, and Ryan
  Williams.
\newblock Simulating branching programs with edit distance and friends: or: a
  polylog shaved is a lower bound made.
\newblock In {\em Proceedings of the 48th Annual {ACM} {SIGACT} Symposium on
  Theory of Computing, {STOC} 2016, Cambridge, MA, USA, June 18-21, 2016},
  pages 375--388, 2016.

\bibitem[AR18]{AR}
Amir Abboud and Aviad Rubinstein.
\newblock Fast and deterministic constant factor approximation algorithms for
  {LCS} imply new circuit lower bounds.
\newblock In {\em 9th Innovations in Theoretical Computer Science Conference,
  {ITCS} 2018, January 11-14, 2018, Cambridge, MA, {USA}}, pages 35:1--35:14,
  2018.

\bibitem[BCH86]{BCH}
Paul Beame, Stephen~A. Cook, and H.~James Hoover.
\newblock Log depth circuits for division and related problems.
\newblock {\em {SIAM} J. Comput.}, 15(4):994--1003, 1986.

\bibitem[Che18]{Chen}
Lijie Chen.
\newblock Toward super-polynomial size lower bounds for depth-two threshold
  circuits.
\newblock {\em CoRR}, abs/1805.10698, 2018.

\bibitem[Cho61]{chow}
Chao-Kong Chow.
\newblock On the characterization of threshold functions.
\newblock In {\em Switching Circuit Theory and Logical Design, 1961. SWCT 1961.
  Proceedings of the Second Annual Symposium on}, pages 34--38. IEEE, 1961.

\bibitem[CS15]{CS}
Ruiwen Chen and Rahul Santhanam.
\newblock Improved algorithms for sparse max-sat and max-k-csp.
\newblock In {\em International Conference on Theory and Applications of
  Satisfiability Testing}, pages 33--45. Springer, 2015.

\bibitem[CSS18]{CSS}
Ruiwen Chen, Rahul Santhanam, and Srikanth Srinivasan.
\newblock Average-case lower bounds and satisfiability algorithms for small
  threshold circuits.
\newblock {\em Theory of Computing}, 14(9):1--55, 2018.

\bibitem[HAB02]{HAB}
William Hesse, Eric Allender, and David A.~Mix Barrington.
\newblock Uniform constant-depth threshold circuits for division and iterated
  multiplication.
\newblock {\em J. Comput. Syst. Sci.}, 65(4):695--716, 2002.

\bibitem[ILPS14]{ILPS}
Russell Impagliazzo, Shachar Lovett, Ramamohan Paturi, and Stefan Schneider.
\newblock 0-1 integer linear programming with a linear number of constraints.
\newblock {\em Electronic Colloquium on Computational Complexity {(ECCC)}},
  21:24, 2014.

\bibitem[IPS97]{IPS97}
Russell Impagliazzo, Ramamohan Paturi, and Michael~E. Saks.
\newblock Size-depth tradeoffs for threshold circuits.
\newblock {\em {SIAM} J. Comput.}, 26(3):693--707, 1997.

\bibitem[IPS13]{IPS}
Russell Impagliazzo, Ramamohan Paturi, and Stefan Schneider.
\newblock A satisfiability algorithm for sparse depth two threshold circuits.
\newblock In {\em Foundations of Computer Science (FOCS), 2013 IEEE 54th Annual
  Symposium on}, pages 479--488. IEEE, 2013.

\bibitem[KKL17]{KKL}
Valentine Kabanets, Daniel~M Kane, and Zhenjian Lu.
\newblock A polynomial restriction lemma with applications.
\newblock In {\em Proceedings of the 49th Annual ACM SIGACT Symposium on Theory
  of Computing}, pages 615--628. ACM, 2017.

\bibitem[KL18]{KL}
Valentine Kabanets and Zhenjian Lu.
\newblock Satisfiability and derandomization for small polynomial threshold
  circuits.
\newblock In {\em LIPIcs-Leibniz International Proceedings in Informatics},
  volume 116. Schloss Dagstuhl-Leibniz-Zentrum fuer Informatik, 2018.

\bibitem[KLMZ17]{KLMZ}
Daniel~M Kane, Shachar Lovett, Shay Moran, and Jiapeng Zhang.
\newblock Active classification with comparison queries.
\newblock In {\em Foundations of Computer Science (FOCS), 2017 IEEE 58th Annual
  Symposium on}, pages 355--366. IEEE, 2017.

\bibitem[KW16]{KW}
Daniel~M. Kane and Ryan Williams.
\newblock Super-linear gate and super-quadratic wire lower bounds for depth-two
  and depth-three threshold circuits.
\newblock In {\em Proceedings of the 48th Annual {ACM} {SIGACT} Symposium on
  Theory of Computing, {STOC} 2016, Cambridge, MA, USA, June 18-21, 2016},
  pages 633--643, 2016.

\bibitem[Mur]{Muroga}
S~Muroga.
\newblock Threshold logic and its applications. 1971.

\bibitem[O'D14]{ODonnell}
Ryan O'Donnell.
\newblock {\em Analysis of boolean functions}.
\newblock Cambridge University Press, 2014.

\bibitem[OS11]{OS}
Ryan O'Donnell and Rocco~A. Servedio.
\newblock The chow parameters problem.
\newblock {\em {SIAM} J. Comput.}, 40(1):165--199, 2011.

\bibitem[PPSZ05]{PPSZ}
Ramamohan Paturi, Pavel Pudl{\'a}k, Michael~E Saks, and Francis Zane.
\newblock An improved exponential-time algorithm for k-sat.
\newblock {\em Journal of the ACM (JACM)}, 52(3):337--364, 2005.

\bibitem[PPZ97]{PPZ}
Ramamohan Paturi, Pavel Pudl{\'a}k, and Francis Zane.
\newblock Satisfiability coding lemma.
\newblock In {\em Foundations of Computer Science, 1997. Proceedings., 38th
  Annual Symposium on}, pages 566--574. IEEE, 1997.

\bibitem[San10]{Santhanam}
Rahul Santhanam.
\newblock Fighting perebor: New and improved algorithms for formula and {QBF}
  satisfiability.
\newblock In {\em 51th Annual {IEEE} Symposium on Foundations of Computer
  Science, {FOCS} 2010, October 23-26, 2010, Las Vegas, Nevada, {USA}}, pages
  183--192, 2010.

\bibitem[Sri13]{Sri13}
Srikanth Srinivasan.
\newblock On improved degree lower bounds for polynomial approximation.
\newblock In {\em LIPIcs-Leibniz International Proceedings in Informatics},
  volume~24. Schloss Dagstuhl-Leibniz-Zentrum fuer Informatik, 2013.

\bibitem[SSTT16]{SSTT}
Takayuki Sakai, Kazuhisa Seto, Suguru Tamaki, and Junichi Teruyama.
\newblock {Bounded Depth Circuits with Weighted Symmetric Gates:
  Satisfiability, Lower Bounds and Compression}.
\newblock In Piotr Faliszewski, Anca Muscholl, and Rolf Niedermeier, editors,
  {\em 41st International Symposium on Mathematical Foundations of Computer
  Science (MFCS 2016)}, volume~58 of {\em Leibniz International Proceedings in
  Informatics (LIPIcs)}, pages 82:1--82:16, Dagstuhl, Germany, 2016. Schloss
  Dagstuhl--Leibniz-Zentrum fuer Informatik.

\bibitem[Wil10]{Will10}
Ryan Williams.
\newblock Improving exhaustive search implies superpolynomial lower bounds.
\newblock In {\em Proceedings of the forty-second ACM symposium on Theory of
  computing}, pages 231--240. ACM, 2010.

\bibitem[Wil11]{Will11}
Ryan Williams.
\newblock Non-uniform acc circuit lower bounds.
\newblock In {\em Computational Complexity (CCC), 2011 IEEE 26th Annual
  Conference on}, pages 115--125. IEEE, 2011.

\bibitem[Wil14]{Will14}
Ryan Williams.
\newblock New algorithms and lower bounds for circuits with linear threshold
  gates.
\newblock In {\em Proceedings of the forty-sixth annual ACM symposium on Theory
  of computing}, pages 194--202. ACM, 2014.

\bibitem[Wil15]{VVWsurvey}
Virginia~Vassilevska Williams.
\newblock Hardness of easy problems: Basing hardness on popular conjectures
  such as the strong exponential time hypothesis (invited talk).
\newblock In {\em 10th International Symposium on Parameterized and Exact
  Computation, {IPEC} 2015, September 16-18, 2015, Patras, Greece}, pages
  17--29, 2015.

\end{thebibliography}

\end{document}